\newtheorem{theorem}{Theorem}
\newtheorem{lemma}{Lemma}
\newtheorem{definition}{Definition}
\newcommand\argmax{\operatorname{arg\,max}}
\newcommand{\mynoindent}{\vspace{0.05in}\noindent}
\title{Algorithmic Design for Competitive Influence Maximization Problems}
\author{
{Yishi Lin, John C.S. Lui}
\vspace{1.6mm}\\
\fontsize{10}{10}\selectfont\itshape
Department of Computer Science and Engineering,
The Chinese University of Hong Kong\\
\fontsize{9}{9}\selectfont\ttfamily\upshape
\{yslin,cslui\}@cse.cuhk.edu.hk
}
\begin{document}
\maketitle

\begin{abstract} 
Given the popularity of the viral marketing campaign in online social networks,
finding an effective method to identify a set of most influential nodes so to
compete well with other viral marketing competitors is of upmost importance.  We
propose a {\em``General Competitive Independent Cascade (GCIC)''} model to
describe the general influence propagation of two competing sources in the same
network.  We formulate the {\em ``Competitive Influence Maximization (CIM)''}
problem as follows: Under a prespecified influence propagation model and that
the competitor's seed set is known, how to find a seed set of $k$ nodes so as to
trigger the largest influence cascade?  We propose a general algorithmic
framework TCIM for the CIM problem under the GCIC model.  TCIM returns a
$(1-1/e-\epsilon)$-approximate solution with probability at least $1-n^{-\ell}$,
and has an efficient time complexity of $O(c(k+\ell)(m+n)\log n/\epsilon^2)$,
where $c$ depends on specific propagation model and may also depend on $k$ and
underlying network $G$.  To the best of our knowledge, this is the first general
algorithmic framework that has both $(1-1/e-\epsilon)$ performance guarantee and
practical efficiency.  We conduct extensive experiments on real-world datasets
under three specific influence propagation models, and show the efficiency and
accuracy of our framework.  In particular, we achieve up to four orders of
magnitude speedup as compared to the previous state-of-the-art algorithms with
the approximate guarantee.
\end{abstract}

\section{\bf Introduction} \label{sec: introduction}
With the popularity of online social networks (OSNs), viral marketing has become
a powerful method for companies to promote sales.  In 2003, Kempe et
al.~\cite{kempe2003maximizing} first formulated the influence maximization
problem: Given a network $G$ and an integer $k$, how to select a set of $k$
nodes in $G$ so that they can trigger the largest influence cascade under a
predefined influence propagation model.  The selected nodes are often referred
to as \textit{seed nodes}.  Kempe et al. proposed the \textit{Independent
Cascade (IC)} model and the \textit{Linear Threshold (LT)} model to describe the
influence propagation process.  They also proved that the influence maximization
problem under these two models is NP-hard and a natural greedy algorithm could
return $(1\!-\!1/e\!-\!\epsilon)$-approximate solutions for any $\epsilon$.
Recently, Tang et al.~\cite{tang2014influence} presented an algorithm with
$(1-1/e-\epsilon)$ approximation guarantee with probability at least
$1-n^{-\ell}$, and runs in time $O((\ell+k)(m+n)\log n/\epsilon^2)$. 

Recognizing that companies are competing in a viral marketing, a thread of work
studied the competitive influence maximization problem under a series of
competitive influence propagation models, where multiple sources spread the
information in a network simultaneously (e.g.,
\cite{bharathi2007competitive,carnes2007maximizing,borodin2010threshold}).  Many
of these work assumed that there are two companies competing with each other and
studied the problem from the ``follower's perspective''.  Here, the ``follower''
is the player who selects seed nodes with the knowledge that some nodes have
already been selected by its opponent.  For example, in the viral marketing, a
company introducing new products into an existing market can be regarded as the
follower and the set of consumers who have already purchased the existing
product can be treated as the nodes influenced by its competitor.  Briefly
speaking, the problem of {\em Competitive Influence Maximization (CIM)} is
defined as the following: Suppose we are given a network $G$ and the set of
\textit{seed nodes} selected by our competitor, how to select the \textit{seed
nodes} for our product in order to trigger the largest influence cascade?  These
optimization problems are NP-hard in general.  Therefore, the selection of seed
nodes is relied either on computationally expensive greedy algorithms with
$(1-1/e-\epsilon)$ approximation guarantee, or on heuristic algorithms with no
approximation guarantee.

To the best of our knowledge, for the CIM problem, there exists no algorithm
with both $(1-1/e-\epsilon)$ approximation guarantee and practical runtime
efficiency.  Furthermore, besides the existing models, we believe that there
will be more competitive influence propagation models proposed for different
applications in the future.  Therefore, we need a {\em general framework} that
can solve the competitive influence maximization problem under a variety of
propagation models. 

\mynoindent\textbf{Contributions}.
We make the following contributions:
\begin{itemize}

\item We define a \textit{General Competitive Independent Cascade (GCIC)} model
    and formally formulate the \textit{Competitive Influence Maximization (CIM)}
    problem.

\item For the CIM problem under a predefined GCIC model, we provide a
    \textit{Two-phase Competitive Influence Maximization (TCIM)} algorithmic
    framework generalizing the algorithm in \cite{tang2014influence}.  TCIM
    returns a $(1-1/e-\epsilon)$-approximate solution with probability at least
    $1-n^{-\ell}$, and runs in $O(c(\ell+k)(m+n)\log n/\epsilon^2)$, where $c$
    depends on specific propagation model, seed-set size $k$ and network $G$.

\item We analyze the performance of TCIM under three specific influence
    propagation models of the GCIC model as reported in literature
    \cite{carnes2007maximizing} and \cite{budak2011limiting}.

\item We conduct extensive experiments on real-world datasets to demonstrate the
    efficiency and effectiveness of TCIM.  In particular, when $k\!=\!50$,
    $\epsilon\!=\!0.5$ and $\ell\!=\!1$, TCIM returns solutions comparable with
    those returned by the previous state-of-the-art greedy algorithms, but TCIM
    runs {\em up to four orders of magnitude faster}.

\end{itemize}

This is the outline of our paper.  Background and related work are given in
Section \ref{sec: related}.  We define the \textit{General Competitive
Independent Cascade} model and the \textit{Competitive Influence Maximization}
problem in Section \ref{sec: problem}.  We present the TCIM framework in Section
\ref{sec: solution} and analyze the performance of TCIM under various influence
propagation models in Section \ref{sec: application}.  We compare TCIM with the
greedy algorithm with performance guarantee in Section \ref{sec: comparison},
and show the experimental results in Section \ref{section: experiments}.
Section \ref{sec: conclusion} concludes.

\section{\bf Background and Related Work} \label{sec: related}

\mynoindent\textbf{Single Source Influence Maximization.} In the seminal
work~\cite{kempe2003maximizing}, Kempe et al. proposed the \textit{Independent
Cascade (IC) model} and the \textit{Linear-Threshold (LT) model} and formally
defined the \textit{influence maximization problem}.  In the IC model, a network
$G$ is given as $G\!=\!(V,E)$ and each edge $e_{uv}\!\in\! E$ is associated with
a probability $p_{uv}$.  Initially, a set of nodes $S$ are \textit{active} and
$S$ is often referred to as the \textit{seed nodes}.  Each active node $u$ has a
single chance to influence its inactive neighbor $v$ and succeeds with
probability $p_{uv}$.  Let $\sigma(S)$ be the expected number of nodes $S$ could
activate, the \textit{influence maximization problem} is defined as how to
select a set of $k$ nodes such that $\sigma(S)$ is maximized.  This problem,
under both the IC model and the LT model, is NP-hard.  However, Kempe et
al.~\cite{kempe2003maximizing} showed that if $\sigma(S)$ is a monotone and
submodular function of $S$, a greedy algorithm can return a solution within a
factor of $(1-1/e-\epsilon)$ for any $\epsilon>0$, in polynomial time.  The
research on this problem went on for around ten years (e.g.,
\cite{leskovec2007cost,chen2009efficient,chen2010scalable,
chen2010scalableicdm,goyal2011celf++,jung2012irie}), but it is not until very
recently, that Borgs et al.~\cite{borgs2014maximizing} made a breakthrough and
presented an algorithm that simultaneously maintains the performance guarantee
and significantly reduces the time complexity.  Recently, Tang et
al.~\cite{tang2014influence} further improved the method in
\cite{borgs2014maximizing} and presented an algorithm TIM/TIM$^+$, where
\textit{TIM} stands for \textit{Two-phase Influence Maximization}.  It returns a
$(1-1/e-\epsilon)$-approximate solution with probability at least $1-n^{-\ell}$
and runs in time $O((\ell+k)(m+n)\log n/\epsilon^2)$, where $n=|V|$ and $m=|E|$.

\mynoindent\textbf{Competitive Influence Maximization.} We review some work that
modeled the competition between two sources and studied the influence
maximization problem from the ``{\em follower's perspective}''.  In general, the
majority of these works considered competition between two players (e.g., two
companies), and the ``follower'' is the player who selects the set of
\textit{seed nodes} with the knowledge of the seed nodes selected by its
competitor.  In \cite{carnes2007maximizing}, Carnes et al. proposed the
\textit{Distance-based model} and the \textit{Wave propagation model} to
describe the influence spread of competing products and considered the influence
maximization problem from the follower's perspective.  Bharathi et
al.~\cite{bharathi2007competitive} proposed an extension of the single source IC
model and utilized the greedy algorithm to compute the best response to the
competitor.  Motivated by the need to limit the spread of rumor in the social
networks, there is a thread of work focusing on how to maximize rumor
containment (e.g., \cite{kostka2008word, budak2011limiting, he2012influence}).
For example, Budak et al.~\cite{budak2011limiting} models the competition
between the ``bad'' and ``good'' source.  They focused on minimizing the number
of nodes end up influenced by the ``bad'' source.

\section{{\bf Competitive Influence Maximization Problem}} \label{sec: problem}

In this section, we first introduce the ``\textit{General Competitive
Independent Cascade (GCIC)}'' model which models the influence propagation of
two competing sources in the same network.  Based on the GCIC model, we then
formally define the \textit{Competitive Influence Maximization (CIM)} problem.

\subsection{General Competitive Independent Cascade Model} \label{sec: general
cicm} Let us first define the \textit{General Competitive Independent Cascade
(GCIC)} model.  A social network can be modeled as a directed graph
$G\!=\!(V,E)$ with $n\!=\!|V|$ nodes and $m\!=\!|E|$ edges.  Users in the social
network are modeled as nodes while directed edges between nodes represent the
interaction between users.  A node $v$ is a neighbor of node $u$ if there is an
edge from $u$ to $v$ in $G$.  Every edge $e_{uv}\in E$ is associated with a
length $d_{uv}>0$ and a probability $p_{uv}$ denoting the influence node $u$ has
on $v$.  For $e_{uv}\notin E$, we assume $p_{uv}=0$ and $d_{uv}=+\infty$.  For
the ease of presentation, we assume the length of all edges is $1$.  Our
algorithm and analysis can be easily extended to the case where edges have
nonuniform lengths.

Denote source $A$ and source $B$ as two sources that simultaneously spread
information in the network $G$.  A node $v\in V$ could be in one of these three
states: $S$, $I_A$ and $I_B$.  Nodes in state \textit{S}, the
\textit{susceptible state}, have not been influenced by any source.  Nodes in
state $I_A$ (resp. $I_B$) are influenced by source $A$ (resp. $B$).  Once a node
becomes influenced, it cannot change its state.  Initially, source $A$ and
source $B$ can each specify a set of seed nodes, which we denote as
$S_A\subseteq V$ and $S_B\subseteq V$.  We refer to nodes in $S_A$ (resp. $S_B$)
as \textit{seeds} or \textit{initial adopters} of source $A$ (resp. $B$).
Following the previous work that modeled the competitive influence propagation
(e.g., \cite{carnes2007maximizing,budak2011limiting}), we also assume $S_A\cap
S_B=\emptyset$.

As in the single source \textit{Independent Cascade (IC)} model, an influenced
node $u$ influences its neighbor $v$ with probability $p_{uv}$ and we say each
edge $e_{uv}\in E$ is \textit{active} with probability $p_{uv}$.  We can first
determine the set of \textit{active} edges $E_a\subseteq E$ by generating a
random number $\l_{uv} \! \in\!  [0,1]$ for every edge $e_{uv}\in E$, and select
$e_{uv}$ when $l_{uv} \! \leq \!  p_{uv}$.  Let $d_{E_a}(v,u)$ be the shortest
distance from $v$ to $u$ through edges in $E_a$ and assume
$d_{E_a}(v,u)=+\infty$ if $v$ cannot reach $u$ through active edges.  Moreover,
let $d_{E_a}(S_A\cup S_B, u)=\min_{v\in S_A\cup S_B}d_{E_a}(v,u)$ be the
shortest distance from nodes in $S_A\cup S_B$ to node $u$ through edges in
$E_a$.  For a given $E_a$, we say a node $v$ is a \textit{nearest initial
adopter} of $u$ if $v\!\in\!S_A\! \cup \! S_B$ and $d_{E_a}(v,
u)\!=\!d_{E_a}(S_A\cup S_B, u)$.  In the GCIC model, for a given $E_a$, a node
$u$ will be in the same state as that of one of its \textit{nearest initial
adopters} at the end of the influence propagation process.  The expected
influence of $S_B$ is the expected number of nodes in state $I_B$ at the end of
the influence propagation process, where the expectation is taken over the
randomness of $E_a$.  Specific influence propagation model of the GCIC model
will specify how the influence propagates in detail, including the tie-breaking
rule for the case where both nodes in $S_A$ and $S_B$ are nearest initial
adopters of a node.

Moreover, we make the following assumptions about the GCIC model.  Given $S_A$,
let $\sigma_u(S_B|S_A)$ be conditional probability that node $u$ will be
influenced by source $B$ when $S_B$ is used as the seed set for source $B$.  We
assume that $\sigma_u(S_B|S_A)$ is a monotone and submodular function of
$S_B\subseteq V\backslash S_A$ for all $u\in V$.  Formally, for any seed set
$S_A\subseteq V$, $S_{B_1}\subseteq S_{B_2}\subseteq V\backslash S_A$ and node
$v\in V\backslash(S_A\cup S_{B_2})$, we have $\sigma_u(S_{B_1}|S_A)\leq
\sigma_u(S_{B_2}|S_A)$ and $\sigma_u(S_{B_1}\cup
\{v\}|S_A)-\sigma_u(S_{B_1}|S_A)\geq \sigma_u(S_{B_2}\cup
\{v\}|S_A)-\sigma_u(S_{B_2}|S_A)$ hold for all $u\in V$.  Let $\sigma(S_B|S_A)$
be the expected influence of $S_B$ given $S_A$, because
$\sigma(S_B|S_A)=\sum_{u\in V}\sigma_u(S_B|S_A)$, $\sigma(S_B|S_A)$ is also a
monotone and submodular function of $S_B\subseteq V\backslash S_A$.

We call this the \textit{General Competitive Independent Cascade} model because
for any given graph $G=(V,E)$ and $S_B\subseteq V$, the expected influence of
$S_B$ given $S_A=\emptyset$ equals to the expected influence of $S_B$ in the
single source IC model.  Note that there are some specific instances of the GCIC
model, for example, the {\em Distance-based Model} and the {\em Wave propagation
Model} \cite{carnes2007maximizing}.  We will elaborate on them in later
sections.

\subsection{Problem Definition}

Given a directed graph $G$, a specific instance of the \textit{General
Competitive Independent Cascade} model (e.g., the Distance-based Model), and
seeds $S_A$ for source $A$, let us formally define the \textit{Competitive
Influence Maximization} problem.
\begin{definition}[Competitive Influence Maximization Problem]
Suppose we are given a specific instance of the \textit{General Competitive
Independent Cascade} model (e.g., the Distance-based Model), a graph $G=(V,E)$
and the seed set $S_A\!\subseteq\! V$ for source $A$, find a set $S_B^*$ of $k$
nodes for source $B$ such that the expected influence of $S_B^*$ given $S_A$ is
maximized, i.e.,
\begin{equation}
S_B^*=\argmax_{S_B\in\{S\subseteq V\backslash S_A,\ |S|=k\}}\sigma(S_B|S_A).
\end{equation}
\end{definition}
For the above problem, we assume $|V\backslash S_A|\geq k$.  Otherwise, we can
simply select all nodes in $V\backslash S_A$.  The \textit{Competitive Influence
Maximization} (CIM) problem is NP-hard in general.  In this paper, our goal is
to provide an approximate solution to the CIM problem with an approximation
guarantee and at the same time, with practical run time complexity.

\section{{\bf Proposed Solution Framework to the CIM Problem}}\label{sec: solution}

In this section, we present the \textit{Two-phase Competitive Influence
Maximization (TCIM)} algorithm to solve the \textit{Competitive Influence
Maximization} problem.  We extend the TIM/TIM$^+$
algorithm~\cite{tang2014influence}, which is designed for the single source
influence maximization problem, to a general framework for the CIM problem under
any specific instance of the \textit{General Competitive Independent Cascade}
model, while maintaining the $(1\!-\!1/e\!-\!\epsilon)$ approximation guarantee
and practical efficiency.

Let us first provide some basic definitions and give the high level idea of the
TCIM.  Then, we provide a detailed description and analysis of the two phases of
the TCIM algorithm, namely the \textit{Parameter estimation and refinement}
phase, and the \textit{Node selection} phase.

\subsection{Basic definitions and high level idea}

Motivated by the definition of ``RR sets'' in \cite{borgs2014maximizing} and
\cite{tang2014influence}, we define the \textit{Reverse Accessible Pointed Graph
(RAPG)}.  We then design a \textit{scoring system} such that for a large number
of random RAPG instances and given seed sets $S_A$ and $S_B$, the average score
of $S_B$ for each RAPG instance is a good approximation of the expected
influence of $S_B$ given $S_A$.

Let $d_g(u,v)$ be the shortest distance from $u$ to $v$ in a graph $g$ and
assume $d_g(u,v)=+\infty$ if $u$ cannot reach $v$ in $g$.  Let $d_g(S,v)$ be the
shortest distance from nodes in set $S$ to node $v$ through edges in $g$, and
assume $d_g(S,v)=+\infty$ if $S=\emptyset$ or $S\neq\emptyset$ but there are no
paths from nodes in $S$ to $v$.  We define the \textit{Reverse Accessible
Pointed Graph (RAPG)} and the random RAPG instance as the following.

\begin{definition}[Reverse Accessible Pointed Graph]
For a given node $v$ in $G$ and a subgraph $g$ of $G$ obtained by removing each
edge $e_{uv}$ in $G$ with probability $1-p_{uv}$, let $R=(V_R,E_R)$ be the
Reverse Accessible Pointed Graph (RAPG) obtained from $v$ and $g$.  The node set
$V_R$ contains $u\in V$ if $d_g(u,v)\leq d_g(S_A,v)$.  And the edge set $E_R$
contains edges on all shortest paths from nodes in $V_R$ to $v$ through edges in
$g$.  We refer to $v$ as the ``root'' of $R$.
\end{definition}

\begin{definition}[Random RAPG instance]
Let $\mathcal{G}$ be the distribution of $g$ induced by the randomness in edge
removals from $G$. A random RAPG instance $R$ is a Reverse Accessible Pointed
Graph (RAPG) obtained from a randomly selected node $v\in V$ and an instance of
$g$ randomly sampled from $\mathcal{G}$.
\end{definition}

Figure \ref{fig: RAPG} shows an example of a random RAPG instance $R=(V_R,E_R)$
with $V_R=\{2,5,7,9,10,11\}$ and $E_R=\{e_4,e_5,e_7,e_8,e_9,e_{10},e_{11}\}$.
The ``root'' of $R$ is node $v=2$.

\begin{figure}[htb]
\centering
\includegraphics[width=0.3\textwidth]{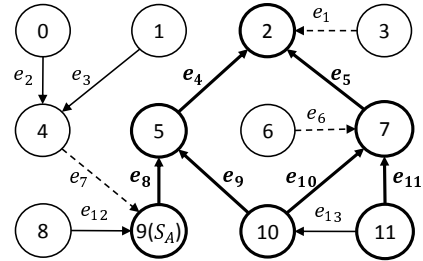} 
\caption{Example of a random RAPG instance: The graph $G=(V,E)$ contains 
$12$ nodes and $13$ directed edges each represented by an arrow. 
The random subgraph $g$ is obtained from $G$ by removing $3$ directed edges 
represented by dashed arrows, i.e., $e_1$, $e_6$ and $e_7$. 
From $g$ and the randomly selected ``root'' node $2$, 
we get the random RAPG instance $R=(V_R,E_R)$ where $V_R=\{2,5,7,9,10,11\}$ 
and $E_R=\{e_4,e_5,e_8,e_9,e_{10},e_{11}\}$.}\label{fig: RAPG}
\end{figure}

Now we present the \textit{scoring system}.  For a random RAPG instance
$R=(V_R,E_R)$ obtained from $v$ and $g\sim\mathcal{G}$, the score of a node set
$S_B$ in $R$ is defined as follows.

\begin{definition}[Score]\label{def: score}
Suppose we are given a random RAPG instance $R=(V_R,E_R)$ obtained from $v$ and
$g\sim\mathcal{G}$. The score of a node set $S_B$ in $R$, denoted by
$f_R(S_B|S_A)$, is defined as the probability that node $v$ will be influenced
by source $B$ when 1) the influence propagates in graph $g$ with all edges being
``active''; and 2) $S_A\cap V_R$ and $S_B\cap V_R$ are seed sets for source $A$
and $B$.
\end{definition}

Recall that for the \textit{General Competitive Independent Cascade} model, we
assume that for any node $u\in V$, the conditional probability
$\sigma_u(S_B|S_A)$ is a monotone and submodular function of $S_B\subseteq
V\backslash S_A$. It follows that, for any given $S_A$ and $R$, $f_R(S_B|S_A)$
is also a monotone and submodular function of $S_B\subseteq V\backslash S_A$.
Furthermore, we define the marginal gain of the score as follows.

\begin{definition}[Marginal gain of score]
For a random RAPG instance $R$ with root $v$, we denote 
\begin{equation}\label{eq: def marginal gain}
\Delta_R(w|S_A,S_B)=f_R(S_B\cup \{w\}|S_A)-f_R(S_B|S_A)
\end{equation}
as the marginal gain of score if we add $w$ to the seed set $S_B$.
\end{definition}

From the definition of GCIC model and that of the RAPG, we know that for any
RAPG instance $R$ obtained from $v$ and $g$, $R$ contains all nodes that can
possibly influence $v$ and all shortest paths from these nodes to $v$.  Hence,
for any given $S_A$, $S_B$ and node $w$, once an instance $R$ is constructed,
the evaluation of $f_R(S_B|S_A)$ and $\Delta_R(w|S_A,S_B)$ can be done based on
$R$ without the knowledge of $g$.  

From Definition \ref{def: score}, for any $S_B\subseteq V\backslash S_A$, the
expected value of $f_R(S_B|S_A)$ over the randomness of $R$ equals to the
probability that a randomly selected node in $G$ can be influenced by $S_B$.
Formally, we have the following lemma.

\begin{lemma}\label{lemma: EISB}
For given seed set $S_A$ and $S_B$, we have
\begin{equation}
\sigma(S_B|S_A)=n\cdot \mathbb{E}[f_R(S_B|S_A)]
\end{equation}
where the expectation of $n\cdot \mathbb{E}[f_R(S_B|S_A)]$ is taken over the
randomness of $R$, and $n$ is the number of nodes in $G$, or $n=|V|$.
\end{lemma}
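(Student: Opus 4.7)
The plan is to express both sides of the claimed identity as a double expectation over the random graph $g \sim \mathcal{G}$ and a uniformly random root node, then match them term by term using a locality observation about the RAPG.

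First, I would decompose the left-hand side by linearity of expectation. Writing $\sigma(S_B|S_A)=\sum_{u\in V}\sigma_u(S_B|S_A)$ and conditioning on the random subgraph $g$ (i.e., the active edge set), I have
\begin{equation*}
\sigma_u(S_B|S_A)=\mathbb{E}_{g\sim\mathcal{G}}\bigl[\Pr[u \text{ is }B\text{-influenced}\mid g,S_A,S_B]\bigr],
\end{equation*}
where the inner probability averages over any tie-breaking randomness specified by the particular GCIC instance.

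The key step, and the main content of the lemma, is the locality claim: for a fixed $g$ and fixed root $u$, if $R=(V_R,E_R)$ is the RAPG obtained from $u$ and $g$, then
\begin{equation*}
\Pr[u\text{ is }B\text{-influenced}\mid g,S_A,S_B]=f_R(S_B|S_A).
\end{equation*}
To prove this, I would argue that no seed outside $V_R$ can be a nearest initial adopter of $u$. By definition $V_R=\{w:d_g(w,u)\le d_g(S_A,u)\}$, so any $w\in (S_A\cup S_B)\setminus V_R$ satisfies $d_g(w,u)>d_g(S_A,u)\ge d_g(S_A\cup S_B,u)$ and is therefore strictly farther from $u$ than some $A$-seed already in $V_R$. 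Restricting the seed sets to $S_A\cap V_R$ and $S_B\cap V_R$ thus leaves the set of nearest initial adopters of $u$ unchanged, and since the GCIC propagation outcome at $u$ depends only on those adopters (and any tie-breaking among them), the probability that $u$ ends in state $I_B$ is identical in the two settings. This matches the definition of $f_R(S_B|S_A)$ exactly, recalling that $R$ retains all shortest paths from $V_R$ to $u$ in $g$ so the relevant distances are preserved. This locality justification is the one step requiring care; the rest is bookkeeping.

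Combining the two displays and summing over $u$,
\begin{equation*}
\sigma(S_B|S_A)=\sum_{u\in V}\mathbb{E}_{g\sim\mathcal{G}}\bigl[f_{R(u,g)}(S_B|S_A)\bigr]=n\cdot\mathbb{E}_{u\sim\mathrm{Unif}(V),\,g\sim\mathcal{G}}\bigl[f_{R(u,g)}(S_B|S_A)\bigr].
\end{equation*}
By the definition of a random RAPG instance (uniform root and $g\sim\mathcal{G}$), the right-hand expectation is precisely $\mathbb{E}[f_R(S_B|S_A)]$, yielding $\sigma(S_B|S_A)=n\cdot\mathbb{E}[f_R(S_B|S_A)]$ as required.
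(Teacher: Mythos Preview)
Your proof is correct and matches the paper's approach: the paper does not give a formal proof but states the lemma as an immediate consequence of Definition~\ref{def: score}, relying on the preceding remark that an RAPG contains all nodes that can possibly influence its root together with all shortest paths from them. Your locality argument is precisely that remark made explicit.
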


Now we provide the Chernoff-Hoeffding bound in the form that we will frequently
use throughout this paper.

\begin{lemma}[Chernoff-Hoeffding Bound]\label{lemma: hoeffding}
Let $X$ be the summation of $\theta$ i.i.d. random variables 
bounded in $[0,1]$ with a mean value $\mu$.
Then, for any $\delta>0$,
\begin{align}
\Pr[X>(1+\delta)\theta\mu]&\leq 
\exp\Big(-\frac{\delta^2}{2+\delta}\cdot \theta\mu\Big),\\
\Pr[X<(1-\delta)\theta\mu]&\leq 
\exp\Big(-\frac{\epsilon^2}{2}\cdot \theta\mu\Big).
\end{align}
\end{lemma}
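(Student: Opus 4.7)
The plan is a textbook Chernoff moment-generating-function argument, applied once for the upper tail and once (with reflected sign) for the lower tail. Write $X=\sum_{i=1}^{\theta} X_i$ with the $X_i\in[0,1]$ i.i.d.\ and $\mathbb{E}[X_i]=\mu$. For any $t>0$, Markov's inequality applied to $e^{tX}$ gives $\Pr[X>(1+\delta)\theta\mu]\le e^{-t(1+\delta)\theta\mu}\,\mathbb{E}[e^{tX}]$, and independence reduces $\mathbb{E}[e^{tX}]$ to $\prod_{i=1}^{\theta}\mathbb{E}[e^{tX_i}]$.

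First I would bound each factor using convexity of $x\mapsto e^{tx}$ on $[0,1]$: the chord inequality $e^{tx}\le 1-x+xe^{t}$ yields $\mathbb{E}[e^{tX_i}]\le 1+\mu(e^{t}-1)\le \exp(\mu(e^{t}-1))$. Assembling and optimizing in $t$ by taking $t=\ln(1+\delta)$ gives the classical raw Chernoff form $\Pr[X>(1+\delta)\theta\mu]\le \bigl(e^{\delta}/(1+\delta)^{1+\delta}\bigr)^{\theta\mu}$.

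Next, to recover the exact form stated, I would invoke the elementary scalar inequality $(1+\delta)\ln(1+\delta)-\delta\ge \delta^{2}/(2+\delta)$ for $\delta>0$, which is verified by noting both sides vanish at $\delta=0$ and comparing derivatives. Taking the logarithm of the raw bound and substituting this inequality produces the upper-tail bound $\exp(-\delta^{2}\theta\mu/(2+\delta))$. For the lower tail I would rerun the argument with $t<0$ (equivalently, apply the same strategy to the shifted variables $1-X_i$), obtaining the raw form $\bigl(e^{-\delta}/(1-\delta)^{1-\delta}\bigr)^{\theta\mu}$, and then use the companion inequality $(1-\delta)\ln(1-\delta)+\delta\ge \delta^{2}/2$ for $\delta\in(0,1)$ to collapse this to $\exp(-\delta^{2}\theta\mu/2)$, reading the exponent as $\delta^{2}$ rather than the $\epsilon^{2}$ printed in the statement.

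The main obstacle is really bookkeeping rather than depth: carefully handling the optimization over $t$ and verifying the two scalar inequalities that convert the raw Chernoff expressions into the clean stated forms. Once those ingredients are in hand, the rest is just the standard independence-plus-Markov pipeline, so no probabilistic subtlety arises.
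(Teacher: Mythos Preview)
Your argument is the standard and correct moment-generating-function derivation of the Chernoff--Hoeffding bounds, and you correctly flag that the $\epsilon^{2}$ in the second display is a typo for $\delta^{2}$. Note, however, that the paper does not actually supply a proof of this lemma: it is simply quoted as a well-known concentration inequality and then invoked downstream (e.g., in the proofs of Lemma~\ref{lemma: theta requirement} and Lemma~\ref{lemma: LB plus}). So there is no ``paper's own proof'' to compare against; your write-up would serve as a self-contained justification where the paper relies on an external reference.
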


By Lemma \ref{lemma: EISB} and Chernoff-Hoeffding bound, for a sufficiently
large number of random RAPG instances, the average score of a set $S_B$ in those
RAPG instances could be a good approximation to the expected influence of $S_B$
in $G$.  The main challenge is how to determine the number of RAPG required, and
how to select seed nodes for source $B$ based on a set of random RAPG instances.
Similar to the work in \cite{tang2014influence}, TCIM consists of two phases as
follows.

\begin{enumerate}
\item \textit{Parameter estimation and refinement}:
Suppose $S_B^*$ is the optimal solution to the 
\textit{Competitive Influence Maximization Problem} and let 
$OPT=\sigma(S_B^*|S_A)$ be the expected influence of $S_B^*$ given $S_A$.
In this phase, TCIM estimates and refines a lower bound of $OPT$ and 
uses the lower bound to derive a parameter $\theta$.

\item \textit{Node selection}: In this phase, TCIM first generates a set 
$\mathcal{R}$ of $\theta$ random RAPG instances of $G$, where $\theta$ is a 
sufficiently large number obtained in the previous phase.
Using the greedy approach,
TCIM returns a set of seed nodes $S_B$ for source $B$
with the goal of maximizing $\sum_{R\in\mathcal{R}}f_R(S_B|S_A)$.
\end{enumerate}

\subsection{Node Selection}

Algorithm \ref{algo: node selection} shows the pseudo-code of the \textit{node
selection} phase.  Given a graph $G$, the seed set $S_A$ of source $A$, the seed
set size $k$ for source $B$ and a constant $\theta$, the algorithm returns a
seed set $S_B$ of $k$ nodes for source $B$ with a large influence spread.  In
Line \ref{algo: node selection start}-\ref{algo: node selection: initmg}, the
algorithm generates $\theta$ random RAPG instances and initializes
$MG_\mathcal{R}(u):=\sum_{R\in\mathcal{R}}f_R(\{u\}|S_A)$ for all nodes $u\in
V\backslash S_A$.  Then, in Line \ref{algo: node selection: ite s} - \ref{algo:
node selection: ite e}, the algorithm selects seed nodes $S_B$ iteratively using
the greedy approach with the goal of maximizing
$\sum_{R\in\mathcal{R}}f_R(S_B|S_A)$.
\begin{algorithm}[htb]
\small
\caption{NodeSelection $(G,S_A,k,\theta)$}
\label{algo: node selection}
\begin{algorithmic}[1]
	\STATE Generate a set $\mathcal{R}$ of $\theta$ random RAPG instances. 
		\label{algo: node selection start}  
	\STATE Let $MG_\mathcal{R}(u)=\sum_{R\in \mathcal{R}}f_R(\{u\}|S_A)$ 
		for all $u\in V\backslash S_A$. \label{algo: node selection: initmg}
	\STATE Initialize the seed set $S_B=\emptyset$.\label{algo: node selection: ite s}
	\FOR{$i=1$ \TO $k$}
		\STATE Identity the node $v_i\!\in\!V\backslash(S_A\!\cup\!S_B)$ 
				with largest $MG_\mathcal{R}(v_i)$.
		\STATE Add $v_i$ to $S_B$.
		\IF {$i<k$}
			\STATE // Update $MG_\mathcal{R}(u)$ as 
					$\sum_{R\in\mathcal{R}}\Delta_R(u|S_A,S_B)$ 
			\STATE // for all $u\in V\backslash(S_A \cup S_B)$.
			\STATE Let $\mathcal{R'}\!=\!\{R|R\in\mathcal{R}, 
					\Delta_R(v_i|S_A,S_B\backslash\{v_i\})>0\}$.
					\label{algo: node selection: mgs}
			\FORALL {$R\in\mathcal{R}'$ and $u\in V_R\backslash(S_A\cup S_B)$} 
			\STATE $MG_\mathcal{R}(u)=MG_\mathcal{R}(u)
						-\Delta_R(u|S_A,S_B\backslash\{v_i\}).$
			\STATE $MG_\mathcal{R}(u)=MG_\mathcal{R}(u)+\Delta_R(u|S_A,S_B)$.
			\ENDFOR \label{algo: node selection: mge}
		\ENDIF
	\ENDFOR  \label{algo: node selection: ite e}
	\RETURN $S_B$ \label{algo: node selection end}
\end{algorithmic}
\end{algorithm}

\mynoindent\textbf{Generation of RAPG instances.} 
We adapt the randomized breadth-first search used in Borg et al.'s
method~\cite{borgs2014maximizing} and Tang et al.'s
algorithm~\cite{tang2014influence} to generate random RAPG instances.  We first
randomly pick a node $r$ in $G$.  Then, we create a queue containing a single
node $r$ and initialize the RAPG instance under construction as
$R=(V_R=\{r\},E_R=\emptyset)$.  For all $u\in V$, let $d_R(u,r)$ be the shortest
distance from $u$ to $r$ in the current $R$ and let $d_R(u,r)=+\infty$ if $u$
cannot reach $r$ in $R$.  We iteratively pop the node $v$ at the top of the
queue and examine its incoming edges.  For each incoming neighbor $u$ of $v$
satisfying $d_R(u,r)\geq d_R(v,r)+1$, we generate a random number $l\!\in
\![0,1]$.  With probability $p_{uv}$ (i.e. $l\leq p_{uv}$), we insert $e_{uv}$
into $R$ and we push node $u$ into the queue if it has not been pushed into the
queue before.  If we push a node $u$ of $S_A$ into the queue while examining the
incoming edge of a node $v$ with $d_R(v,r)=d$, we terminate the breadth-first
search after we have examined incoming edges of all nodes whose distance to $r$
in $R$ is $d$.  Otherwise, the breadth-first search terminates naturally when
the queue becomes empty.  If reverse the direction of all edges in $R$, we
obtain an accessible pointed graph with ``root'' $r$, in which all nodes are
reachable from $r$.  For this reason, we refer to $r$ as the ``root'' of $R$.

\mynoindent\textbf{Greedy approach.}
Let $F_\mathcal{R}(S_B|S_A)=\sum_{R\in\mathcal{R}}f_R(S_B|S_A)$ for all
$S_B\subseteq V\backslash S_A$.  Line \ref{algo: node selection: ite
s}-\ref{algo: node selection: ite e} in Algorithm \ref{algo: node selection}
uses the greedy approach to select a set of nodes $S_B$ with the goal of
maximizing $F_\mathcal{R}(S_B|S_A)$.  Since the function $f_R(S_B|S_A)$ is a
monotone and submodular function of $S_B\subseteq V\backslash S_A$ for any RAPG
instance $R$, we can conclude that $F_\mathcal{R}(S_B|S_A)$ is also a monotone
and submodular function of $S_B\subseteq V\backslash S_A$ for any $\mathcal{R}$.
Hence, the greedy approach in Algorithm \ref{algo: node selection} could return
a $(1-1/e)$ approximation solution~\cite{nemhauser1978analysis}.  Formally, let
$S_B^*$ be the optimal solution, the greedy approach returns a solution $S_B$
such that $F_\mathcal{R}(S_B|S_A)\geq (1-1/e)F_\mathcal{R}(S_B^*|S_A)$.

\mynoindent\textbf{The ``marginal gain vector''. } During the greedy selection
process, we maintain a vector $MG_\mathcal{R}$ such that
$MG_\mathcal{R}(u)=F_\mathcal{R}(S_B\cup\{u\}|S_A)-F_\mathcal{R}(S_B|S_A)$ holds
for current $S_B$ and all $u\in V\backslash(S_A\cup S_B)$.  We refer to
$MG_\mathcal{R}$ as the ``{\em marginal gain vector}''.  The initialization of
$MG_\mathcal{R}$ could be done during or after the generation of random RAPG
instances, whichever is more efficient.  At the end of each iteration of the
greedy approach, we update $MG_\mathcal{R}$.  Suppose in one iteration, we
expand the previous seed set $S_B'$ by adding a node $v_i$ and the new seed set
is $S_B=S_B'\cup\{v_i\}$.  For any RAPG instance $R$ such that $v_i\notin V_R$,
we would have $\Delta_R(u|S_A,S_B')=\Delta_R(u|S_A,S_B)$ for all $u\in
V\backslash(S_A\cup S_B)$.  And for any RAPG instance $R$ such that
$f_R(S_B'|S_A)=1$, for all $u\in V\backslash(S_A\cup S_B)$, we would have
$\Delta_R(u|S_A,S_B')=0$ and the marginal gain of score cannot be further
decreased.  To conclude, for a given RAPG instance $R=(V_R,E_R)$ and a node
$u\in V_R\backslash(S_A\cup S_B)$, $\Delta_R(u|S_A,S_B)$ differs from
$\Delta_R(u|S_A,S_B')$ only if $v_i\in V_R$ and $f_R(S_B'|S_A)<1$.  Hence, to
update $MG_\mathcal{R}(u)$ as $\sum_{R\in\mathcal{R}}\Delta_R(u|S_A,S_B)$ for
all $u\in V\backslash(S_A\cup S_B)$, it is not necessary to compute
$\Delta_R(u|S_A,S_B)$ for all $R\in\mathcal{R}$ and $u\in V\backslash(S_A\cup
S_B)$.  Note that for any RAPG instance $R$,
$\Delta_R(v_i|S_A,S_B\backslash\{v_i\})>0$ implies $v_i\in V_R$ and
$f_R(S_B'|S_A)>1$.  Therefore, Line \ref{algo: node selection: mgs}-\ref{algo:
node selection: mge} do the update correctly.

\mynoindent\textbf{Time complexity analysis.} Let $\mathbb{E}[N_R]$ be the
expected number of random numbers required to generate a random RAPG instance,
the time complexity of generating $\theta$ random RAPG instances is
$O(\theta\cdot \mathbb{E}[N_R])$.  Let $\mathbb{E}\left[|E_R|\right]$ be the
expected number of edges in a random RAPG instance, which is no less than the
expected number of nodes in a random RAPG instance.  We assume that the
initialization and update of $MG_\mathcal{R}$ takes time $O(c\theta\cdot
\mathbb{E}\left[|E_R|\right])$.  Here, $c=\Omega(1)$ depends on specific
influence propagation model and may also depend on $k$ and $G$.  In each
iteration, we go through $MG_\mathcal{R}(u)$ for all nodes $u\in
V\backslash(S_A\cup S_B)$ and select a node with the largest value, which takes
time $O(n)$.  Hence, the total running time of Algorithm \ref{algo: node
selection} is $O(kn+\theta\cdot \mathbb{E}[N_R] + c\theta\cdot
\mathbb{E}\left[|E_R|\right])$.  Moreover, from the fact that
$\mathbb{E}\left[|E_R|\right]\leq \mathbb{E}[N_R]$ and $c=\Omega(1)$, the total
running time can be written in a more compact form as 
\begin{equation}\label{eq: algo 1 time}
O(kn+c\theta\cdot \mathbb{E}[N_R]).
\end{equation}
In Section \ref{sec: application}, we will show the value of $c$ and provide the
total running time of the TCIM algorithm for several influence propagation
models.

\mynoindent\textbf{The Approximation guarantee.} From Lemma \ref{lemma: EISB},
we see that the larger $\theta$ is, the more accurate is the estimation of the
expected influence.  The key challenge now becomes how to determine the value of
$\theta$, i.e., the number of RAPG instances required, so to achieve certain
accuracy of the estimation.  More precisely, we would like to find a $\theta$
such that the node selection algorithm returns a
$(1\!-\!1/e\!-\!\epsilon)$-approximation solution.  At the same time, we also
want $\theta$ to be as small as possible since it has the direct impact on the
running time of Algorithm \ref{algo: node selection}.

Using the Chernoff-Hoeffding bound, the following lemma shows that for a set
$\mathcal{R}$ of sufficiently large number of random RAPG instances,
$F_\mathcal{R}(S_B|S_A)\cdot n/\theta
=\left(\sum_{R\in\mathcal{R}}f_R(S_B|S_A)\right)\cdot n/\theta$ could be an
accurate estimate of the influence spread of $S_B$ given $S_A$, i.e.,
$\sigma(S_B|S_A)$.

\begin{lemma}\label{lemma: theta requirement}
Suppose we are given a set $\mathcal{R}$ of $\theta$ random RAPG instances,
where $\theta$ satisfies
\begin{equation}
\theta \geq (8+2\epsilon)n\cdot
\frac{\ell \ln n + \ln \binom{n}{k}+\ln 2}{OPT\cdot \epsilon^2}.
\label{eq: theta requirement}
\end{equation}
Then, with probability at least $1-n^{-\ell}$, 
\begin{equation}
\left|\frac{n}{\theta}\cdot F_\mathcal{R}(S_B|S_A)-\sigma(S_B|S_A)\right|
	< \frac{\epsilon}{2}OPT
\label{eq: theta estimate accuracy}
\end{equation}
holds for all $S_B\subseteq V\backslash S_A$ with $k$ nodes.
\end{lemma}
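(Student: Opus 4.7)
I would fix an arbitrary $k$-subset $S_B \subseteq V\setminus S_A$, apply the Chernoff-Hoeffding bound (Lemma \ref{lemma: hoeffding}) to the sum $F_\mathcal{R}(S_B|S_A) = \sum_{R \in \mathcal{R}} f_R(S_B|S_A)$, and then union bound over the at most $\binom{n}{k}$ such subsets. By Lemma \ref{lemma: EISB}, the $\theta$ summands are i.i.d.\ $[0,1]$-valued random variables with common mean $\mu = \sigma(S_B|S_A)/n$, so the target inequality \eqref{eq: theta estimate accuracy} is equivalent to the multiplicative statement $|F_\mathcal{R}(S_B|S_A) - \theta\mu| < \delta\theta\mu$ with
\begin{equation*}
\delta \;=\; \frac{\epsilon\, OPT}{2\sigma(S_B|S_A)}.
\end{equation*}
The degenerate case $\sigma(S_B|S_A)=0$ is handled separately ($f_R\equiv 0$, so the bound is trivial); otherwise $\sigma(S_B|S_A)\leq OPT$ guarantees $\delta\geq \epsilon/2$.

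\textbf{Executing both tails.} Next I would instantiate the two Chernoff-Hoeffding estimates with this $\delta$. For the lower tail, a direct simplification gives $\delta^2\theta\mu/2 \geq \epsilon^2 OPT\,\theta/(8n)$ after using $\sigma(S_B|S_A) \leq OPT$, and substituting the hypothesis \eqref{eq: theta requirement} turns this into an exponent of at least $(1+\epsilon/4)\bigl(\ell\ln n + \ln\binom{n}{k} + \ln 2\bigr)$, bounding the lower tail by $1/\bigl(2n^\ell\binom{n}{k}\bigr)$. For the upper tail, the analogous algebra yields $\delta^2\theta\mu/(2+\delta) \geq \epsilon^2 OPT\,\theta/\bigl(2n(4+\epsilon)\bigr)$, and \eqref{eq: theta requirement} again produces a tail bound of $1/\bigl(2n^\ell\binom{n}{k}\bigr)$, with the constants $(8+2\epsilon)$ and $2(4+\epsilon)$ cancelling exactly.

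\textbf{Union bound and main obstacle.} Summing the two tail bounds gives failure probability at most $1/\bigl(n^\ell\binom{n}{k}\bigr)$ for the fixed $S_B$; a union bound over the $\binom{|V\setminus S_A|}{k}\leq \binom{n}{k}$ candidate subsets of size $k$ then yields the claimed $1-n^{-\ell}$. The main obstacle I anticipate is the upper tail: the factor $\delta^2/(2+\delta)$ in Lemma \ref{lemma: hoeffding} degenerates when $\sigma(S_B|S_A)$ is small (and hence $\delta$ is large), and the clean cancellation above only works because one bounds $4\sigma(S_B|S_A)+\epsilon OPT$ by $(4+\epsilon)OPT$ using $\sigma(S_B|S_A)\leq OPT$. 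This is precisely what forces the constant $(8+2\epsilon)$ in the hypothesis \eqref{eq: theta requirement} and is what couples the multiplicative-deviation form of Chernoff-Hoeffding to the additive, $OPT$-normalized target of the lemma.
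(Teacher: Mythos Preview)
Your proposal is correct and follows essentially the same route as the paper's proof: fix $S_B$, rewrite the additive deviation as a multiplicative one with $\delta=\epsilon\,OPT/(2n\mu)$, apply Lemma~\ref{lemma: hoeffding}, use $n\mu\le OPT$ to turn the exponent into $\epsilon^2 OPT\,\theta/((8+2\epsilon)n)$, and union bound over $\binom{n}{k}$ sets. The only cosmetic difference is that the paper bounds both tails at once by $2\exp\bigl(-\tfrac{\delta^2}{2+\delta}\theta\mu\bigr)$ (valid since the lower-tail exponent $\delta^2/2$ dominates $\delta^2/(2+\delta)$), whereas you treat the two tails separately and note the extra slack $(1+\epsilon/4)$ on the lower tail; both yield the same final bound.
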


\begin{proof}
First, let $S_B$ be a given seed set with $k$ nodes.  Let
$\mu=\mathbb{E}[f_R(S_B|S_A)]$,
$F_\mathcal{R}(S_B|S_A)=\sum_{R\in\mathcal{R}}f_R(S_B|S_A)$ can be regarded as
the sum of $\theta$ i.i.d. variables with a mean $\mu$.  By Lemma \ref{lemma:
EISB}, we have $\mu=\sigma(S_B|S_A)/n\leq OPT/n$.  Thus, by Chernoff-Hoeffding
bound,
\begin{align*}
&\Pr\left[\left|\frac{n}{\theta}\cdot F_\mathcal{R}(S_B|S_A)-
	\sigma(S_B|S_A)\right| \geq \frac{\epsilon OPT}{2}\right ] \\
=& \Pr\left[\left|F_\mathcal{R}(S_B|S_A)-\theta\cdot \mu\right|
	\geq \frac{\epsilon OPT}{2n\mu}\cdot \theta\mu\right] \\
\leq & 2\exp\left(-\frac{(\frac{\epsilon OPT}{2n\mu})^2}
	{2+\frac{\epsilon OPT}{2n\mu}}\!\cdot\!\theta\mu\right)
\!=\! 2\exp\left(-\frac{\epsilon^2 OPT^2}{8n^2\mu+2\epsilon n OPT}
	\!\cdot\! \theta\right)\\
\leq & 2\exp\left(-\frac{\epsilon^2 OPT}{(8+2\epsilon)\cdot n}\cdot \theta\right)
\leq n^{-\ell}/\binom{n}{k}.
\end{align*}
The last step follows by Inequality (\ref{eq: theta requirement}).
There are at most $\binom{n}{k}$ node set $S_B\subseteq V\backslash S_A$ with 
$k$ nodes. By union bound, with probability at least 
$1-n^{-\ell}$, Inequality (\ref{eq: theta estimate accuracy})
holds for all $S_B\subseteq V\backslash S_A$ with $k$ nodes.
\end{proof}

For the value of $\theta$ in Algorithm \ref{algo: node selection},
we have the following theorem.

\begin{theorem}\label{theorem: theta requirement}
Given that $\theta$ satisfies Inequality (\ref{eq: theta requirement}),
\mbox{Algorithm \ref{algo: node selection}} returns a solution with 
$(1-1/e-\epsilon)$ approximation with probability at least $1-n^{-\ell}$.
\end{theorem}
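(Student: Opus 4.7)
The plan is to combine three ingredients already in hand: (i) the standard $(1-1/e)$ guarantee of the greedy algorithm on a monotone submodular objective (Nemhauser et al.), (ii) the uniform concentration bound from Lemma \ref{lemma: theta requirement}, and (iii) the observation that both the algorithm's output and the true optimum are size-$k$ subsets, so both are covered by that uniform bound.

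To set up notation, let $S_B^{\circ}$ be any maximizer of $F_{\mathcal{R}}(\cdot|S_A)$ over size-$k$ subsets of $V\backslash S_A$, let $S_B^{*}$ be an optimum of the CIM problem with $\sigma(S_B^{*}|S_A)=OPT$, and let $S_B$ denote the output of Algorithm \ref{algo: node selection}. Since $f_R(\cdot|S_A)$ is monotone and submodular for every RAPG instance $R$, so is the sum $F_{\mathcal{R}}(\cdot|S_A)$, and therefore the greedy step inside Algorithm \ref{algo: node selection} satisfies
\[ F_{\mathcal{R}}(S_B|S_A) \ \geq\ (1-1/e)\, F_{\mathcal{R}}(S_B^{\circ}|S_A) \ \geq\ (1-1/e)\, F_{\mathcal{R}}(S_B^{*}|S_A), \]
where the second inequality uses the definition of $S_B^{\circ}$.

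Next I would invoke Lemma \ref{lemma: theta requirement}: under the hypothesis on $\theta$, with probability at least $1-n^{-\ell}$ the event $\mathcal{E}=\{\,|\tfrac{n}{\theta}F_{\mathcal{R}}(T|S_A)-\sigma(T|S_A)|<\tfrac{\epsilon}{2}OPT\ \text{for every size-$k$ }T\subseteq V\backslash S_A\,\}$ holds. Conditioning on $\mathcal{E}$, apply the bound to $T=S_B^{*}$ to get $\tfrac{n}{\theta}F_{\mathcal{R}}(S_B^{*}|S_A)>(1-\tfrac{\epsilon}{2})OPT$, and to $T=S_B$ to get $\sigma(S_B|S_A)>\tfrac{n}{\theta}F_{\mathcal{R}}(S_B|S_A)-\tfrac{\epsilon}{2}OPT$. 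Multiplying the greedy inequality by $n/\theta$ and chaining,
\[ \sigma(S_B|S_A) \ >\ (1-1/e)(1-\tfrac{\epsilon}{2})\,OPT - \tfrac{\epsilon}{2}\,OPT \ \geq\ (1-1/e-\epsilon)\,OPT, \]
where the last inequality follows from $(1-1/e)(1-\tfrac{\epsilon}{2})-\tfrac{\epsilon}{2}=(1-1/e)-\epsilon+\tfrac{\epsilon}{2e}\geq(1-1/e-\epsilon)$.

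The only point requiring care, and what I expect to be the main (though mild) obstacle, is that $S_B$ is itself a random set determined by $\mathcal{R}$, so a per-set Chernoff argument for $S_B$ alone would be circular. This is precisely why Lemma \ref{lemma: theta requirement} is proved as a simultaneous guarantee over all $\binom{n}{k}$ candidate size-$k$ sets via a union bound, and why the $\ln\binom{n}{k}$ term appears in the lower bound on $\theta$ in Inequality (\ref{eq: theta requirement}). Once this uniformity is available, the steps above are straightforward algebra and the approximation factor $(1-1/e-\epsilon)$ drops out, with the same failure probability $n^{-\ell}$ inherited from Lemma \ref{lemma: theta requirement}.
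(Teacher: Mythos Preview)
Your proposal is correct and mirrors the paper's own proof essentially step for step: invoke the $(1-1/e)$ greedy guarantee on the monotone submodular empirical objective $F_{\mathcal{R}}$, use the uniform concentration from Lemma \ref{lemma: theta requirement} at both the greedy output $S_B$ and the true optimum, and chain the inequalities to obtain $(1-1/e-\epsilon)\,OPT$. The only difference is notational (your $S_B^{\circ}$ and $S_B^{*}$ correspond to the paper's $S_B^{*}$ and $S_B^{\text{opt}}$, respectively), and your explicit remark about why the union bound over all $\binom{n}{k}$ sets is needed is a welcome clarification that the paper leaves implicit.
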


\begin{proof}
Suppose we are given a set $\mathcal{R}$ of $\theta$ random RAPG instances
where $\theta$ satisfies Inequality (\ref{eq: theta requirement}).
Let $S_B$ be the set of nodes returned by Algorithm \ref{algo: node selection}
and let $S_B^*$ be the set that maximizes $F_\mathcal{R}(S_B^*|S_A)$.
As we are using a $(1-1/e)$ greedy approach to select $S_B$,
we have $F_\mathcal{R}(S_B|S_A)\geq (1-1/e)F_\mathcal{R}(S_B^*|S_A)$.

Let $S_B^{\text{opt}}$ be the optimum seed set for source $B$, i.e.,
the set of nodes that maximizes the influence spread of $B$.
We have $F_\mathcal{R}(S_B^{\text{opt}}|S_A)\leq F_\mathcal{R}(S_B^*|S_A)$.

By Lemma \ref{lemma: theta requirement},
with probability at least $1-n^{-\ell}$,
we have
$\sigma(S_B|S_A)\geq F_\mathcal{R}(S_B|S_A)\cdot n/\theta-OPT\cdot \epsilon /2$
holds simultaneously for all $S_B\subseteq V\backslash S_A$ with $k$ nodes.

Thus, we can conclude
\begin{align*}
\sigma(S_B|S_A) & \geq \frac{n}{\theta}\cdot F_\mathcal{R}(S_B|S_A) 
	- \frac{\epsilon}{2}OPT \\
& \hspace{-0.5in} \geq \frac{n}{\theta}\cdot (1-1/e)F_\mathcal{R}(S_B^*|S_A) 
	- \frac{\epsilon}{2}OPT \\
& \hspace{-0.5in} \geq \frac{n}{\theta}\cdot (1-1/e)F_\mathcal{R}(S_B^{\text{opt}}|S_A) 
	- \frac{\epsilon}{2}OPT \\
& \hspace{-0.5in} \geq (1\!-\!1/e)(OPT\!-\!\frac{\epsilon}{2}OPT)\!-\!\frac{\epsilon}{2}OPT 
 \geq (1\!-\!1/e\!-\!\epsilon)OPT,
\end{align*}
which completes the proof.
\end{proof}

By Theorem \ref{theorem: theta requirement}, let
$\lambda\!=\!(8+2\epsilon)n
(\ell \ln n + \ln \binom{n}{k}+\ln 2)/\epsilon^2$,
we know Algorithm \ref{algo: node selection} returns a $(1-1/e-\epsilon)$-approximate solution for any $\theta\geq \lambda/OPT$.

\subsection{Parameter Estimation}

The goal of our parameter estimation algorithm is to find a lower 
bound $LB_e$ of $OPT$ so that $\theta=\lambda/LB_\text{e}\geq \lambda/OPT$.
Here, the subscript ``e'' of $LB_\text{e}$ is short for ``estimated''.

\mynoindent\textbf{Lower bound of $OPT$.}
We first define graph $G'=(V,E')$ as 
a subgraph of $G$ with all edges pointing to $S_A$ removed,
i.e., $E'\!=\!\{e_{uv}|e_{uv}\in E, v\in V\backslash S_A\}$.
Let $m'=|E'|$.
Then, we define a probability distribution $\mathcal{V^+}$ over the 
nodes in $V\backslash S_A$, such that the probability mass for each node is 
proportional to its number of incoming neighbors in $G'$.
Suppose we take $k$ samples from $\mathcal{V}^+$ and use them to form 
a node set $S_B^+$ with duplicated nodes eliminated.
A natural lower bound of $OPT$ would be the expected influence spread of 
$S_B^+$ given the seeds for source $A$ is $S_A$, i.e., $\sigma(S_B^+|S_A)$.
Furthermore, any lower bound of $\sigma(S_B^+|S_A)$ is also a lower bound of $OPT$.
In the following lemma, we present a lower bound of $\sigma(S_B^+|S_A)$.

\begin{lemma}\label{lemma: lower bound OPT}
Let $R$ be a random RAPG instance and let 
$V_R'=\{u|u\in V_R\backslash S_A, f_R(\{u\}|S_A)=1\}$.
We define the width of $R$, denoted by $w(R)$, as the number of edges in $G$ 
pointing to nodes in $V_R'$. Then, we define
\begin{equation}
\alpha(R)=1-\left(1-\frac{w(R)}{m'}\right)^k.
\end{equation}
We have $n\cdot \mathbb{E}[\alpha(R)]\leq \sigma(S_B^+|S_A)$, where 
the expectation of $\mathbb{E}[\alpha(R)]$ is taken over the randomness of $R$.
\end{lemma}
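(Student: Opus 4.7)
The plan is to establish a pointwise bound relating $\alpha(R)$ to the score $f_R(S_B^+|S_A)$ for each random RAPG instance $R$, and then take expectations. By Lemma~\ref{lemma: EISB} (applied after conditioning on the random set $S_B^+$ and using the tower property of expectation),
\begin{equation*}
\sigma(S_B^+|S_A) = n\cdot \mathbb{E}_{R,\,S_B^+}\bigl[f_R(S_B^+|S_A)\bigr],
\end{equation*}
so it suffices to show $\mathbb{E}_{S_B^+}[f_R(S_B^+|S_A)] \geq \alpha(R)$ for every realization of $R$.

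The key structural step is the following. Fix $R$ and recall $V_R' = \{u \in V_R\setminus S_A : f_R(\{u\}|S_A) = 1\}$. If the random set $S_B^+$ contains at least one node $u \in V_R'$, then by monotonicity of $f_R(\cdot|S_A)$ we have $f_R(S_B^+|S_A) \geq f_R(\{u\}|S_A) = 1$, so $f_R(S_B^+|S_A) = 1$. Since the score is always nonnegative, this yields the pointwise bound
\begin{equation*}
f_R(S_B^+|S_A) \geq \mathbf{1}\bigl[S_B^+ \cap V_R' \neq \emptyset\bigr].
\end{equation*}

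Next I would compute the probability that a single draw from $\mathcal{V}^+$ lands inside $V_R'$. Since each node $u \in V\setminus S_A$ is sampled with probability equal to its in-degree in $G'$ divided by $m'$, and since $V_R' \subseteq V\setminus S_A$ so that the incoming edges of nodes in $V_R'$ are all retained in $G'$, the single-sample hit probability is exactly $\sum_{u\in V_R'} d^{G'}_{\text{in}}(u)/m' = w(R)/m'$. Because $S_B^+$ is formed from $k$ independent draws (duplicate elimination only helps, since we only care about the event ``$S_B^+$ meets $V_R'$''), the probability that none of the $k$ draws hits $V_R'$ is at most $(1 - w(R)/m')^k$, so
\begin{equation*}
\Pr\bigl[S_B^+ \cap V_R' \neq \emptyset\bigr] \geq 1 - \bigl(1 - w(R)/m'\bigr)^k = \alpha(R).
\end{equation*}
Taking expectation of the pointwise bound over $S_B^+$, then over $R$, and multiplying by $n$ gives $n\cdot\mathbb{E}[\alpha(R)] \leq \sigma(S_B^+|S_A)$, as claimed.

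I do not anticipate a serious obstacle; the only point that requires mild care is that the sampling distribution $\mathcal{V}^+$ is defined relative to $G'$ (with edges into $S_A$ deleted) while $w(R)$ counts edges in $G$, so one must confirm that for nodes in $V_R' \subseteq V\setminus S_A$ the incoming edges coincide in $G$ and $G'$. Once that is observed, the rest is a short monotonicity-plus-union-bound argument.
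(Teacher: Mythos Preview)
Your proposal is correct and follows essentially the same route as the paper: bound $f_R(S_B^+|S_A)$ below by the indicator that $S_B^+$ meets $V_R'$, compute that hitting probability as $\alpha(R)$ via the in-degree-proportional sampling, and then invoke Lemma~\ref{lemma: EISB}. The paper phrases the hitting-probability computation as an equivalence with sampling $k$ edges from $E'$ (its $p_1(R)=p_2(R)$ step), but your direct computation of the single-draw hit probability $w(R)/m'$ is the same content, and you correctly flag and resolve the $G$-versus-$G'$ edge-counting subtlety.
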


\begin{proof}
Let $S_B^+$ be a set formed by $k$ samples from $\mathcal{V}^+$ 
with duplicated nodes eliminated and suppose we are given a random RAPG instance $R$.
Let $p_1(R)$ be the probability that $S_B^+$ overlaps with $V_R'$.
For any $S_B^+$, we have $f_R(S_B^+|S_A)\geq 0$ by definition of the scoring 
system.
Moreover, if $S_B^+$ overlaps with $V_R'$, we would have $f_R(S_B^+|S_A)=1$.
Hence, $p_1(R)\leq f_R(S_B^+|S_A)$ holds
and $n\cdot \mathbb{E}[p_1(R)]\leq n\cdot \mathbb{E}[f_R(S_B^+|S_A)]=\sigma(S_B^+|S_A)$ 
follows from Lemma \ref{lemma: EISB}.
Furthermore, suppose we randomly select $k$ edges 
from $E'$ and form a set $E^+$.
Let $p_2(R)$ be the probability that at least one edge in $E^+$ points to 
a node in $V_R'$.
It can be verified that $p_1(R)=p_2(R)$.
From the definition of $w(R)$, we have 
$p_2(R)=\alpha(R)=1-(1-w(R)/m')^k$.

Therefore, we can conclude that
\begin{equation*}
\mathbb{E}[\alpha(R)]=\mathbb{E}[p_2(R)]=\mathbb{E}[p_1(R)]
\leq \sigma(S_B^+|S_A)/n,
\end{equation*}
which completes the proof.
\end{proof}

Let $LB_\text{e}:=n\cdot \mathbb{E}[\alpha(R)]$.
Then, Lemma \ref{lemma: lower bound OPT} 
shows that $LB_\text{e}$ is a lower bound of $OPT$.

\mynoindent\textbf{Estimation of the lower bound.}
By Lemma \ref{lemma: lower bound OPT}, we can estimate $LB_\text{e}$ by first 
measuring $n\cdot \alpha(R)$ on a set of random RAPG instances and then 
take the average of the estimation.
By Chernoff-Hoeffding bound, to obtain an estimation of $LB_\text{e}$ within 
$\delta\in[0,1]$ relative error with probability at least $1-n^{-\ell}$, 
the number of measurements required is 
$\Omega(n\ell \log n \epsilon^{-2}/LB_\text{e})$.
The difficulty is that we usually have no prior knowledge about $LB_\text{e}$.
In \cite{tang2014influence}, Tang et al. provided an adaptive sampling 
approach which dynamically adjusts the number of measurements based on 
the observed sample value.
Suppose $S_A=\emptyset$, the lower bound $LB_\text{e}$ we want to estimate
equals to the lower bound of maximum influence spread estimated 
in \cite{tang2014influence}.
Hence, we apply Tang et al.'s approach directly and Algorithm \ref{algo: estimate LB} 
shows the pseudo-code that estimates $LB_\text{e}$.
\begin{algorithm}[htb]
\small
\caption{EstimateLB$(G,\ell)$~\cite{tang2014influence}}
\label{algo: estimate LB}
\begin{algorithmic}[1]
	\FOR{$i=1$ \TO $\log_{2}n-1$}
		\STATE Let $c_i=(6\ell \ln n+6\ln(\log_2n))\cdot 2^{i}$.
		\STATE Let $s_i=0$.
		\FOR {$j=1$ \TO $c_i$}
			\STATE Generate a random RAPG instance $R$ and calculate $\alpha(R)$.
			\STATE Update $s_i=s_i+\alpha(R)$.
		\ENDFOR
		\IF {$s_i > c_i/2^i$}
			\RETURN $LB_\text{e}^*=n\cdot s_i/(2\cdot c_i)$.
		\ENDIF
	\ENDFOR
	\RETURN $LB_\text{e}^*=1$.
\end{algorithmic}
\end{algorithm}

For Algorithm \ref{algo: estimate LB}, the theoretical analysis
in \cite{tang2014influence} can be applied directly 
and the following theorem holds.
For the proof of Theorem \ref{theorem: LB star}, we refer 
interested readers to \cite{tang2014influence}.

\begin{theorem}\label{theorem: LB star}
When $n\geq 2$ and $\ell\geq 1/2$, Algorithm \ref{algo: estimate LB} 
returns $LB_\text{e}^*\in[LB_\text{e}/4,OPT]$ with at least $1-n^{-\ell}$ 
probability, and has expected running time $O(\ell(m+n)\log n)$.
Furthermore, $\mathbb{E}[1/LB_\text{e}^*]<12/LB_\text{e}$.
\end{theorem}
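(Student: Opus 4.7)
The plan is to adapt Tang et al.'s adaptive-sampling analysis of their analogous lower-bound estimation subroutine~\cite{tang2014influence} to the competitive setting. The structural reason this adaptation goes through is that Algorithm \ref{algo: estimate LB} only depends on the random variable $\alpha(R) \in [0,1]$ through its sample average, and Lemma \ref{lemma: lower bound OPT} guarantees that $\mu := \mathbb{E}[\alpha(R)] = LB_\text{e}/n$ is a lower bound on $OPT/n$. So the analysis reduces to proving that the doubling schedule $c_i = (6\ell\ln n + 6\ln\log_2 n)\cdot 2^i$ together with the stopping rule $s_i > c_i/2^i$ produces a constant-factor estimate of $\mu$ with high probability while using only $O((m+n)\log n)$ expected work.

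For the correctness bound, I fix an iteration $i$ and apply Lemma \ref{lemma: hoeffding} to $s_i = \sum_{j=1}^{c_i}\alpha(R_j)$ with $\delta = 1/2$. Because of the $2^i$ factor in $c_i$, a short calculation shows that whenever $\mu \geq 1/2^{i-1}$ the event $|s_i/c_i - \mu| \leq \mu/2$ fails with probability at most $n^{-\ell}/\log_2 n$, and whenever $\mu \leq 1/2^{i+1}$ the event $s_i/c_i \leq 1/2^i$ fails with probability at most $n^{-\ell}/\log_2 n$. A union bound over the at most $\log_2 n - 1$ iterations leaves a global failure probability of at most $n^{-\ell}$. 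On the complementary event, if the algorithm halts at iteration $i^\star$, then $s_{i^\star}/c_{i^\star}$ lies within a factor two of $\mu$, so $LB_\text{e}^* = n s_{i^\star}/(2c_{i^\star}) \in [LB_\text{e}/4, LB_\text{e}] \subseteq [LB_\text{e}/4, OPT]$; and if the loop exhausts without halting, the upper-tail control at $i = \log_2 n - 1$ forces $\mu \leq 4/n$, hence $LB_\text{e} \leq 4$ and the fallback $LB_\text{e}^* = 1$ still lies in $[LB_\text{e}/4, OPT]$.

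For the expected running time, let $i^\dagger$ be the smallest index with $c_{i^\dagger}\mu \geq 6\ell\ln n + 6\ln\log_2 n$. Another Chernoff application shows that the algorithm halts by iteration $i^\dagger$ with probability at least $1 - n^{-\ell}$. Because the $c_i$ form a geometric series, the total number of RAPG samples drawn up to $i^\dagger$ is $O(c_{i^\dagger}) = O(\ell\log n / \mu)$. Charging each edge-coin flip during the randomized BFS to an edge of $G'$ as in Tang et al.'s EPT argument, the expected per-sample cost is $O(\mu(m+n))$, and multiplying gives total expected work $O(\ell(m+n)\log n)$. Finally, for the moment bound I partition on the halting index: on the high-probability good event I have $1/LB_\text{e}^* \leq 4/LB_\text{e}$, while on the failure event $1/LB_\text{e}^* \leq 1$, and together with the observation that $LB_\text{e} \leq n$ the $n^{-\ell}$ mass contributes only lower-order error, yielding $\mathbb{E}[1/LB_\text{e}^*] < 12/LB_\text{e}$.

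The main obstacle I expect is the per-sample cost claim used in the runtime bound: the BFS in the competitive RAPG generation is truncated as soon as a node of $S_A$ is reached, so the usual charging argument from the single-source setting must be re-examined to confirm that this early termination can only reduce the expected number of coin flips and therefore preserves the $O(\mu(m+n))$ bound. Once that is in place, the remainder is a direct transplant of Tang et al.'s concentration and union-bound accounting, with Lemma \ref{lemma: lower bound OPT} playing the role of the single-source bound on the mean of $\alpha(R)$.
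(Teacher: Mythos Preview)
Your proposal is correct and aligns with the paper's own treatment: the paper does not give an independent proof of Theorem~\ref{theorem: LB star} but simply states that ``the theoretical analysis in~\cite{tang2014influence} can be applied directly'' and refers the reader there. Your sketch faithfully reconstructs that analysis, and you in fact go further than the paper by explicitly flagging the one place where the competitive setting could differ---the early truncation of the BFS when a node of $S_A$ is reached---and correctly observing that this can only reduce the per-sample cost, so the $O(\ell(m+n)\log n)$ bound carries over.
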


\mynoindent\textbf{Running time of the node selection process.}
We have shown how to estimate a lower bound of $OPT$, now we analyze
\mbox{Algorithm \ref{algo: node selection}} assuming 
$\theta\!=\!\lambda/LB_\text{e}^*$.
From $\theta\!\geq\! \lambda/OPT$ and 
Theorem \ref{theorem: theta requirement},
we know \mbox{Algorithm \ref{algo: node selection}} returns a 
$(1-1/e-\epsilon)$-approximate solution with high probability. 
Now we analyze the running time of Algorithm \ref{algo: node selection}.
The running time of building $\theta$ random RAPG instances is 
$O(\theta\cdot \mathbb{E}[N_R])=O(\frac{\lambda}{LB_\text{e}^*}
\cdot \mathbb{E}[N_R])$ where $\mathbb{E}[N_R]$
is the expected number of random numbers generated 
for building a random RAPG instance.
The following lemma shows the relationshp between $LB_\text{e}$ and $\mathbb{E}[N_R]$.

\begin{lemma}\label{lemma: ERAND leq LB}
$LB_\text{e}\geq \frac{n}{m}\mathbb{E}[N_R]$.
\end{lemma}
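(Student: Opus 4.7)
The plan is to chain two deterministic inequalities, $\alpha(R)\ge w(R)/m'$ and $w(R)\ge N_R$, take expectations, and then absorb the ratio $m'/m$ using the obvious bound $m'\le m$.

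First, I would invoke the elementary fact $1-(1-x)^k\ge x$ for $x\in[0,1]$ and $k\ge 1$ (which is immediate from $(1-x)^{k-1}\le 1$), applied to $x=w(R)/m'\in[0,1]$, to obtain $\alpha(R)\ge w(R)/m'$ pointwise. Taking expectations yields
\[
LB_\text{e} \;=\; n\,\mathbb{E}[\alpha(R)] \;\ge\; \frac{n}{m'}\,\mathbb{E}[w(R)].
\]

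Next, I would argue $w(R)\ge N_R$ on every sample path of the randomized BFS that constructs $R$. Because a random number is generated exactly once per in-edge the BFS examines, and in-edges are examined only at popped nodes, $N_R\le\sum_{v\text{ popped}}\deg_G^-(v)$. To pin down the popped set, recall that the BFS assigns each discovered node $u$ a depth equal to its shortest distance $d_g(u,r)$ in the sampled subgraph, and by the termination rule it halts after finishing the last level from which no $S_A$-node has yet been pushed, i.e., the last level before depth $d_g(S_A,r)$. Hence the popped nodes are precisely $\{v\in V:d_g(v,r)<d_g(S_A,r)\}$. Any such $v$ is strictly closer to $r$ than every node of $S_A$, so $v\notin S_A$ and, when $\{v\}$ is the seed set for source $B$, $v$ is the unique nearest initial adopter of $r$ in $g$; by the GCIC rule this forces $f_R(\{v\}|S_A)=1$, i.e., $v\in V_R'$. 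Therefore $N_R\le\sum_{v\in V_R'}\deg_G^-(v)=w(R)$. Combining with the first step and $m'\le m$ gives
\[
LB_\text{e} \;\ge\; \frac{n}{m'}\,\mathbb{E}[w(R)] \;\ge\; \frac{n}{m'}\,\mathbb{E}[N_R] \;\ge\; \frac{n}{m}\,\mathbb{E}[N_R].
\]

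The main obstacle is the second step: one must use the BFS termination rule carefully to characterize the popped nodes and then invoke the GCIC propagation rule to conclude their membership in $V_R'$. A minor corner case is $r\in S_A$ (occurring with probability $|S_A|/n$), where $V_R=\{r\}$, $V_R'=\emptyset$, and no in-edges are examined under the natural convention, so the inequality holds trivially. The remaining manipulations are purely arithmetic.
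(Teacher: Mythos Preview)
Your proposal is correct and follows essentially the same route as the paper: the paper also argues that every random number generated in the BFS corresponds to an in-edge of a node $v\in V_R'$ (hence $N_R\le w(R)$ pointwise), then chains $w(R)/m\le w(R)/m'\le \alpha(R)$ and takes expectations. Your version simply fleshes out the first step by explicitly characterizing the popped nodes and invoking the GCIC nearest-adopter rule, whereas the paper states the key observation in one sentence.
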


\begin{proof}
For a given RAPG instance $R$, recall that $w(R)$ is defined as the number of 
edges in $G$ pointing to any node $u\in V$ such that $u\in V\backslash S_A$ and 
$f_R(\{u\}|S_A)=1$.
If we generate a random number for an edge $e_{uv}\in E$ during the 
generation of $R$, 
we know $v\in V\backslash S_A$ and $f_R(\{v\}|S_A)=1$.
Hence, the number of random number generated during the generation of $R$
is no more than $w(R)$ and we have $\mathbb{E}[N_R]\leq \mathbb{E}[w(R)]$.
Moreover, we can conclude that
\begin{align*}
\frac{n}{m} \mathbb{E}[N_R] \leq & n\cdot\mathbb{E}\left[\frac{w(R)}{m}\right]
\leq n\cdot\sum_R \left(\Pr(R)\cdot \frac{w(R)}{m'}\right) \\
\leq & n\cdot\sum_R \left(\Pr(R)\cdot \alpha(R)\right)
= n\cdot\mathbb{E}[\alpha(R)],
\end{align*}
which completes the proof.
\end{proof}

Based on Theorem \ref{theorem: LB star}
showing $\mathbb{E}[1/LB_\text{e}^{*}]=O(1/LB_\text{e})$
and Lemma \ref{lemma: ERAND leq LB} showing 
$LB_\text{e}\geq \mathbb{E}[N_R]\cdot n/m$,
we can conclude that
$\mathbb{E}\left[\mathbb{E}[N_R]/LB_\text{e}^*\right]=O(1+m/n)$.
Recall that the greedy selection process in Algorithm \ref{algo: node selection} 
has time complexity $O(kn + c\theta\cdot \mathbb{E}[N_R])$.
Let $\theta=\lambda/LB_\text{e}^{*}$, the total running time of 
Algorithm \ref{algo: node selection} becomes
$O\big(kn\!+\!c\lambda \mathbb{E}[N_R]/LB_\text{e}^{*}\big)
\!=\!O\left(c(\ell\!+\!k)(m\!+\!n)\log n/\epsilon^{2}\right)$.

\subsection{Parameter Refinement}

As discussed before, if the lower bound of $OPT$ is tight,
our algorithm will have a short running time.
The current lower bound $LB_\text{e}$ is no greater than the expected 
influence spread of a set of $k$ independent samples from $\mathcal{V^+}$, 
with duplicated eliminated. 
Hence, $LB_\text{e}$ is often much smaller than the $OPT$. 
To narrow the gaps between the lower bound we get in 
Algorithm \ref{algo: estimate LB} and $OPT$,
we use a greedy algorithm to find a seed set $S_B'$ based on the limited number 
of RAPG instances we have already generated in 
Algorithm \ref{algo: estimate LB}, and 
estimate the influence spread of $S_B'$ with a reasonable accuracy. 
Then, the intuition is that we can use a creditable lower bound of 
$\sigma(S_B'|S_A)$ or $LB_\text{e}^*$, whichever is larger, 
as the \textit{refined} bound.

Algorithm \ref{algo: refine LB} describes how to refine the lower bound.
\mbox{Line \ref{algo: refine LB: greedy s}-\ref{algo: refine LB: greedy e}}
uses the greedy approach to find a seed set $S_B'$ based on the RAPG instances 
generated in Algorithm \ref{algo: estimate LB}.
Intuitively, $S_B'$ should have a large influence spread 
when used as seed set for source $B$.
Line \ref{algo: refine LB: influence s}-\ref{algo: refine LB: influence e}
estimates the expected influence of $S_B'$, i.e. $\sigma(S_B'|S_A)$.
By Lemma \ref{lemma: EISB}, let $\mathcal{R''}$ be a set of RAPG instances, 
$F:=n\cdot \left(\sum_{R\in\mathcal{R''}}f_R(S_B'|S_A)\right)/|\mathcal{R''}|$ is 
an unbiased estimation of $\sigma(S_B'|S_A)$.
Algorithm \ref{algo: refine LB} generates a sufficiently large number of 
RAPG instances and put them into $\mathcal{R''}$ such that 
$F\leq (1+\epsilon')\sigma(S_B'|S_A)$ holds with high probability.
Then, with high probability, we have 
$F/(1+\epsilon')\leq\sigma(S_B'|S_A)\leq OPT$.
We use $LB_\text{r}=\max\{F/(1+\epsilon'),LB_\text{e}^*\}$ 
as the refined lower bound of $OPT$, which will be used to derive $\theta$ 
in Algorithm \ref{algo: node selection}.
The subscript ``r'' of $LB_\text{r}$ stands for ``refinement''.

\begin{algorithm}[htb]
\small
\caption{RefineLB$(G,k,S_A,LB_\text{e}^*,\epsilon,\ell)$}\label{algo: refine LB}
\begin{algorithmic}[1]
	\STATE Let $\mathcal{R'}$ be the set of RAPG instances generated 
				in Algorithm \ref{algo: estimate LB}.
	\STATE Let $MG_\mathcal{R'}(u)=\sum_{R\in \mathcal{R'}}f_R(\{u\}|S_A)$ 
	for all $u\in V\backslash S_A$.\label{algo: refine LB: greedy s}
	\STATE Initialize the seed set $S_B'=\emptyset$. 
	\FOR{$i=1$ \TO $k$}
		\STATE Identity the node $v_i\!\in\!V\backslash(S_A\cup S_B)$ 
		with largest $MG_\mathcal{R'}(v_i)$.
		\STATE Add $v_i$ to $S_B'$.
		\IF {$i<k$}
			\STATE Update $MG_\mathcal{R'}(u)$ as
					$\sum_{R\in\mathcal{R}}\Delta_R(u|S_A,S_B)$ 
					for all $u\in V\backslash(S_A \cup S_B)$.
		\ENDIF
	\ENDFOR \label{algo: refine LB: greedy e}
	\STATE $\epsilon'=5\cdot\sqrt[3]{\ell\cdot \epsilon^2/(\ell+k)}$ 
	\label{algo: refine LB: influence s}
	\STATE $\lambda'=(2+\epsilon')\ell n\ln n/\epsilon'^2$ 
	\STATE $\theta'=\lambda'/LB_\text{e}^*$
	\STATE Generate a set $\mathcal{R''}$ of $\theta'$ random RAPG instances.
	\STATE Let $F=n\cdot \left(\sum_{R\in\mathcal{R''}}f_R(S_B'|S_A)\right)/\theta'$.
	\label{algo: refine LB: influence e} \label{algo: refine LB: F def}
	\RETURN $LB_\text{r}=\max\{F/(1+\epsilon'),LB_\text{e}^*\}$
\end{algorithmic}
\end{algorithm}

\mynoindent\textbf{Theoretical analysis.}
We now prove that Algorithm \ref{algo: refine LB} returns 
$LB_\text{r}\in [LB_\text{e}^*,OPT]$ with a high probability.

\begin{lemma}\label{lemma: LB plus}
If $LB_\text{e}^*\in[LB_\text{e}/4,OPT]$, Algorithm \ref{algo: refine LB} 
returns $LB_\text{r}\in [LB_\text{e}^*,OPT]$ with at least $1-n^{-\ell}$ probability.
\end{lemma}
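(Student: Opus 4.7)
The plan is to decompose the two required inequalities in $LB_\text{r}\in[LB_\text{e}^*, OPT]$ and handle them separately. The lower bound $LB_\text{r}\geq LB_\text{e}^*$ is immediate from the definition $LB_\text{r}=\max\{F/(1+\epsilon'),LB_\text{e}^*\}$ and requires no probabilistic argument. The substantive content of the lemma is the upper bound $LB_\text{r}\leq OPT$, which (since $LB_\text{e}^*\leq OPT$ is part of the hypothesis) reduces to showing that $F/(1+\epsilon')\leq OPT$ holds with probability at least $1-n^{-\ell}$. Equivalently, I need to bound $\Pr[F > (1+\epsilon')\,OPT]$.

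I would argue conditionally on the seed set $S_B'$ produced in Lines \ref{algo: refine LB: greedy s}--\ref{algo: refine LB: greedy e}. Since $\mathcal{R}''$ is generated independently of $\mathcal{R}'$, conditional on $S_B'$ the values $f_R(S_B'|S_A)$ for $R\in\mathcal{R}''$ are i.i.d.\ in $[0,1]$ with mean $\mu=\sigma(S_B'|S_A)/n$ by Lemma \ref{lemma: EISB}. I will apply the upper-tail Chernoff--Hoeffding bound of Lemma \ref{lemma: hoeffding} to $X=\sum_{R\in\mathcal{R}''}f_R(S_B'|S_A)$ with the threshold chosen so that $(1+\delta)\theta'\mu=(1+\epsilon')\,OPT$, i.e.\ $\delta=(1+\epsilon')OPT/(n\mu)-1$. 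Note $\delta\geq \epsilon'$ because $n\mu=\sigma(S_B'|S_A)\leq OPT$.

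The main technical step is to show that $\frac{\delta^2}{2+\delta}\cdot\theta'\mu\geq \ell\ln n$ regardless of how small $\sigma(S_B'|S_A)$ happens to be. Writing $A=n\mu$ and $B=OPT$, a direct calculation gives
\begin{equation*}
\frac{\delta^2}{2+\delta}\cdot\theta'\mu
=\frac{\theta'\,((1+\epsilon')B-A)^2}{n\,(A+(1+\epsilon')B)}.
\end{equation*}
Since $A\leq B$ yields $(1+\epsilon')B-A\geq \epsilon' B$ and $A+(1+\epsilon')B\leq(2+\epsilon')B$, this quantity is at least $\theta'\epsilon'^2 B/(n(2+\epsilon'))$. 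Substituting $\theta'=\lambda'/LB_\text{e}^*$ with $\lambda'=(2+\epsilon')\ell n\ln n/\epsilon'^2$ and using $B=OPT\geq LB_\text{e}^*$ (the hypothesis), the factors collapse cleanly to give the bound $\ell\ln n$. Hence $\Pr[X\geq(1+\delta)\theta'\mu\mid S_B']\leq e^{-\ell\ln n}=n^{-\ell}$, and taking expectation over $S_B'$ preserves this bound.

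The only step that feels delicate is ensuring the Chernoff tail is controllable even when $\sigma(S_B'|S_A)$ is far smaller than $OPT$; this is precisely where the algebraic manipulation above is needed, since then $\delta$ grows and $\mu$ shrinks, and I have to check that their combined contribution to $\delta^2\theta'\mu/(2+\delta)$ stays bounded below by $\epsilon'^2\,OPT\,\theta'/(n(2+\epsilon'))$, which is where the sharp cancellation with the choice of $\lambda'$ occurs. Everything else is bookkeeping around the $\max$ in the definition of $LB_\text{r}$ and the hypothesis $LB_\text{e}^*\leq OPT$.
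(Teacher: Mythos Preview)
Your proposal is correct and follows essentially the same approach as the paper: reduce to bounding $\Pr[F>(1+\epsilon')\,OPT]$, apply the upper-tail Chernoff--Hoeffding bound of Lemma~\ref{lemma: hoeffding} to $\sum_{R\in\mathcal{R}''}f_R(S_B'|S_A)$, then use $n\mu\leq OPT$ and $OPT\geq LB_\text{e}^*$ to cancel against the choice of $\lambda'$ and obtain the exponent $\ell\ln n$. Your explicit mention of conditioning on $S_B'$ (using the independence of $\mathcal{R}''$ from $\mathcal{R}'$) is a small added bit of rigor that the paper leaves implicit, but otherwise the algebra and the key inequalities are the same.
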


\begin{proof}
As $LB_\text{r}=\max\{F/(1+\epsilon'),LB_\text{e}^*\}$ and 
$LB_\text{e}^*\leq OPT$, it is suffice to show $F/(1+\epsilon')\leq OPT$ 
holds with probability at least $1-n^{-\ell}$.
By Line \ref{algo: refine LB: F def} in Algorithm \ref{algo: refine LB}, 
we know $F/(1+\epsilon')\leq OPT$ if and only if
$\sum_{R\in\mathcal{R''}}f_R(S_B'|S_A)\leq OPT\cdot \theta'(1+\epsilon')/n$.
Let $\mu=\mathbb{E}[f_R(S_B'|S_A)]$, by Lemma \ref{lemma: EISB}, 
we have $n\mu=\sigma(S_B'|S_A)\leq OPT$.
Since $\sum_{R\in\mathcal{R''}}f_R(S_B'|S_A)$ is the summation of 
$|\mathcal{R}''|=\theta'$ i.i.d. random variable with mean $\mu$, 
By $n\mu\leq OPT$, $LB_\text{e}^*\leq OPT$ and Chernoff bound,
\begin{align*}
&\Pr\left[\sum_{R\in\mathcal{R''}}f_R(S_B'|S_A) 
				\geq \frac{OPT(1+\epsilon')}{n}\cdot \theta'\right] \\
\leq &\Pr\left[\sum_{R\in\mathcal{R''}}f_R(S_B'|S_A)-\mu\theta' 
				\geq \frac{OPT\epsilon'}{n\mu}\cdot \mu\theta'\right]\\
\leq & \exp\left(-\frac{\left(\frac{OPT\epsilon'}{n\mu}\right)^2}
				{2+\frac{OPT\epsilon'}{n\mu}}\cdot \mu\theta'\right)
\!=\! \exp\left(-\frac{OPT^2\epsilon'^2}{2n^2\mu+OPT\epsilon'n}\cdot \theta'\right)\\
\leq & \exp\left(-\frac{OPT\epsilon'^2}{(2+\epsilon')n}
			\cdot \frac{\lambda'}{LB_\text{e}^*}\right)
\leq \exp\left(-\frac{\epsilon'^2\lambda'}{(2+\epsilon')n}\right)
\leq \frac{1}{n^\ell}.
\end{align*}
The last inequality holds since $\lambda'=(2+\epsilon')\ell n\ln n/\epsilon'^2$ 
and this completes the proof.
\end{proof}

\mynoindent\textbf{Time complexity.}
We now analyze the time complexity of Algorithm \ref{algo: refine LB}.
The running time of Line 
\ref{algo: refine LB: greedy s}-\ref{algo: refine LB: greedy e} 
depends on $|\mathcal{R'}|$.
Theorem \ref{theorem: LB star} shows that the expected running time of 
\mbox{Algorithm \ref{algo: estimate LB}}
is $O(\ell(m+n)\log n)$, 
which means that the total number of edges in all RAPG instances is 
at most $O(\ell(m+n)\log n)$.
Hence, in Lines \ref{algo: refine LB: greedy s}-\ref{algo: refine LB: greedy e}
of Algorithm \ref{algo: refine LB},
the running time for the initialization and update of 
$MG_\mathcal{R'}$ would be $O(c\ell(m+n)\log n)$.
And the running time of Line 
\ref{algo: refine LB: greedy s}-\ref{algo: refine LB: greedy e} 
would be $O(c\ell(m+n)\log n + kn) = O(c(\ell+k)(m+n)\log n)$.
The running time of the Line 
\ref{algo: refine LB: influence s}-\ref{algo: refine LB: influence e}
is $O\left(\mathbb{E}[\lambda'/LB_\text{e}^*]\cdot \mathbb{E}[N_R]\right)$,
because we generate $\lambda'/LB_\text{e}^*$ RAPG instances and 
the running time of calculating $f_R(S_B'|S_A)$ for an RAPG instance 
$R=(V_R,E_R)$ is linear with $|V_R|$.
As $\mathbb{E}[1/LB_\text{e}^*]=O(12/LB)$ holds from 
Theorem \ref{theorem: LB star} and $\frac{n}{m}\mathbb{E}[N_R]\leq LB_\text{e}$ 
holds from Lemma \ref{lemma: ERAND leq LB}, we can conclude that
\begin{align*}
O\left(\mathbb{E}\left[\frac{\lambda'}{LB_\text{e}^*}\right]\cdot \mathbb{E}[N_R]\right)
&=O\left(\frac{\lambda'}{LB_\text{e}}\cdot \mathbb{E}[N_R]\right)\\
&\hspace{-1.2in}
=O\left(\frac{\lambda'}{LB_\text{e}}\cdot \left(1+\frac{m}{n}\right)LB_\text{e}\right)
=O(\ell (m+n)\log n/\epsilon'^2).
\end{align*}

To make sure that Algorithm \ref{algo: refine LB} has the same time 
complexity as Algorithm \ref{algo: node selection}, 
the value of $\epsilon'$ must satisfy 
$\epsilon'\geq \sqrt{\ell/(c(\ell+k))}\cdot \epsilon$.
In TIM/TIM$^+$~\cite{tang2014influence}
that returns approximation solution for single source influence 
maximization problem under the IC model,
Tang et al. set $\epsilon'=5\sqrt[3]{\ell\cdot\epsilon^2/(k+\ell)}$ 
for any $\epsilon\leq 1$.
Note that for a special case of the \textit{General Competitive Independent 
Cascade} model where $S_A=\emptyset$, the influence propagation model is 
actually the single source IC model.
Hence, we also set $\epsilon'=5\sqrt[3]{\ell\cdot\epsilon^2/(k+\ell)}$ 
for any $\epsilon\leq 1$.
Note that $c\geq 1$, it could be verified that 
$5\sqrt[3]{\ell\cdot\epsilon^2/(k+\ell)}\geq \sqrt{\ell/(c(\ell+k))}\epsilon$ 
holds for any $\epsilon\leq 1$.
Based on Lemma \ref{lemma: LB plus} and the time complexity analysis above, 
we have the following Theorem.
\begin{theorem}
Given $\mathbb{E}[1/LB_\text{e}^*]=O(12/LB_\text{e})$ and 
$LB_\text{e}^*\in[LB_\text{e}/4,OPT]$, Algorithm \ref{algo: refine LB} 
returns $LB_\text{r}\in [LB_\text{e}^*,OPT]$ with at least $1-n^{-\ell}$ 
probability and runs in $O(c(m+n)(\ell+k)\log n/\epsilon^{2})$ expected time.
\end{theorem}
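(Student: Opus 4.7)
The theorem combines two claims: (i) a correctness claim that $LB_\text{r} \in [LB_\text{e}^*, OPT]$ holds with probability at least $1 - n^{-\ell}$, and (ii) an expected-runtime claim of $O(c(m+n)(\ell+k)\log n / \epsilon^2)$. My plan is to handle these separately, essentially bundling Lemma \ref{lemma: LB plus} for correctness and then assembling the runtime from the two logically distinct stages of Algorithm \ref{algo: refine LB}.

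For correctness, the containment $LB_\text{r} \geq LB_\text{e}^*$ is immediate from the $\max$ operator in the returned value, and the hypothesis already gives $LB_\text{e}^* \leq OPT$, so the only nontrivial piece is $F/(1+\epsilon') \leq OPT$ with the stated failure probability. This is exactly what Lemma \ref{lemma: LB plus} establishes via Chernoff--Hoeffding applied to $\sum_{R \in \mathcal{R}''} f_R(S_B'|S_A)$ with the choice $\theta' = \lambda'/LB_\text{e}^*$, so I would simply invoke that lemma and note that its hypotheses match the assumptions of the present theorem.

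For the runtime, I would split Algorithm \ref{algo: refine LB} into the greedy-selection block (Lines \ref{algo: refine LB: greedy s}--\ref{algo: refine LB: greedy e}) and the influence-estimation block (Lines \ref{algo: refine LB: influence s}--\ref{algo: refine LB: influence e}). For the greedy block, Theorem \ref{theorem: LB star} bounds the expected total edge count over $\mathcal{R}'$ by $O(\ell(m+n)\log n)$, so initialization and updates of $MG_{\mathcal{R}'}$ cost $O(c\ell(m+n)\log n)$, and the $k$ argmax scans add $O(kn)$, giving $O(c(\ell+k)(m+n)\log n)$ overall. For the estimation block, the cost is $O(\mathbb{E}[\lambda'/LB_\text{e}^*]\cdot \mathbb{E}[N_R])$; the hypothesis $\mathbb{E}[1/LB_\text{e}^*] = O(1/LB_\text{e})$ combined with $\frac{n}{m}\mathbb{E}[N_R] \leq LB_\text{e}$ from Lemma \ref{lemma: ERAND leq LB} collapses this to $O(\ell(m+n)\log n/\epsilon'^2)$.

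The main obstacle — and essentially the only real verification needed — is showing that the specific choice $\epsilon' = 5\sqrt[3]{\ell \epsilon^2/(k+\ell)}$ makes the estimation-block bound absorbable into $O(c(m+n)(\ell+k)\log n/\epsilon^2)$. Concretely, one needs the elementary inequality $\epsilon' \geq \sqrt{\ell/(c(\ell+k))}\cdot \epsilon$, which for $\epsilon \leq 1$ and $c \geq 1$ (the regime the paper works in) reduces to a cube-versus-square comparison. Once this is checked, $1/\epsilon'^2 \leq c(\ell+k)/(\ell \epsilon^2)$ and the two block bounds combine into the claimed runtime, completing the proof.
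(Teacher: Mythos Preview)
Your proposal is correct and mirrors the paper's own argument essentially step for step: correctness is delegated to Lemma~\ref{lemma: LB plus}, the runtime is split into the greedy block (bounded via Theorem~\ref{theorem: LB star}'s edge-count guarantee on $\mathcal{R}'$) and the estimation block (bounded via $\mathbb{E}[1/LB_\text{e}^*]=O(1/LB_\text{e})$ together with Lemma~\ref{lemma: ERAND leq LB}), and the final absorption of $O(\ell(m+n)\log n/\epsilon'^2)$ into the target bound is handled by exactly the inequality $\epsilon' \geq \sqrt{\ell/(c(\ell+k))}\cdot \epsilon$ that you isolate.
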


If $LB_\text{r}\geq LB_\text{e}^*$, let $\theta=\lambda/LB_\text{r}$,
the total running time of Algorithm \ref{algo: node selection} is still
$O(c(\ell+k)(m+n)\log n/\epsilon^{2})$.

\subsection{TCIM as a whole}

Now we are in the position to put Algorithm \ref{algo: node
selection}-\ref{algo: refine LB} together and present the complete TCIM
algorithm.  Given a network $G$, the seed set $S_A$ for the source $A$ together
with parametric values $k$, $\ell$ and $\epsilon$, TCIM returns a
$(1-1/e-\epsilon)$ solution with probability at least $1-n^{-\ell}$.  First,
Algorithm \ref{algo: estimate LB} returns the estimated lower bound of $OPT$,
denoted by $LB_\text{e}^*$.  Then, we feed $LB_\text{e}^*$ to  Algorithm
\ref{algo: refine LB} and get a refined lower bound $LB_\text{r}$.  Finally,
Algorithm \ref{algo: node selection} returns a set $S_B$ of $k$ seeds for source
$B$ based on $\theta=\lambda/LB_\text{r}$ random RAPG instances.  Algorithm
\ref{algo: all} describes the pseudo-code of TCIM as a whole.

\begin{algorithm}[htb]
\small
\caption{TCIM $(G,S_A,k,\ell,\epsilon)$}
\label{algo: all}
\begin{algorithmic}[1]
	\STATE $\ell'=\ell+\ln 3/\ln n$
	\STATE $LB_\text{e}^*=\text{EstimateLB}(G,\ell')$
	\STATE $LB_\text{r}=\text{RefineLB}(G,k,S_A,LB_\text{e}^*,\epsilon,\ell')$
	\STATE $\lambda=(8+2\epsilon)n\left(\ell'\ln n+
				\ln \binom{n}{k}+\ln 2\right)/\epsilon^{2}$
	\STATE $\theta=\lambda/LB_\text{r}$
	\STATE $S_B=\text{NodeSelection}(G,S_A,k,\theta)$
	\RETURN $S_B$
\end{algorithmic}
\end{algorithm}

We use $\ell'=\ell+\ln 3/\ln n$ as the input parameter value of $\ell$ for 
Algorithm \ref{algo: node selection}-\ref{algo: refine LB}.
By setting this, Algorithm \ref{algo: node selection}-\ref{algo: refine LB} 
each fails with probability at most $n^{-\ell}/3$.
Hence, by union bound, TCIM succeeds in returning a 
$(1-1/e-\epsilon)$ approximation solution with probability at least $1-n^{-\ell}$.
Moreover, the total running time of TCIM is $O(c(\ell+k)(m+n)\log n/\epsilon^{2})$,
because Algorithm \ref{algo: node selection}-\ref{algo: refine LB}
each takes time at most $O(c(\ell+k)(m+n)\log n/\epsilon^{2})$.
In conclusion, we have the following theorem.

\begin{theorem}[TCIM]
TCIM returns $(1-1/e-\epsilon)$-approximate solution with probability 
at least $1-n^{-\ell}$. The time complexity is $O(c(\ell+k)(m+n)\log n/\epsilon^2)$.
\end{theorem}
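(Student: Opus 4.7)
The plan is to assemble the TCIM guarantee directly from the three per-stage results already proved in the excerpt, using the parameter $\ell' = \ell + \ln 3/\ln n$ and a union bound.

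First I would argue correctness. Applying the cited guarantee for \textsc{EstimateLB} (with parameter $\ell'$), Algorithm \ref{algo: estimate LB} returns $LB_\text{e}^{*}\in[LB_\text{e}/4,\,OPT]$ with probability at least $1-n^{-\ell'}/$(a constant), and in particular, with our choice of $\ell'$, its failure probability is at most $n^{-\ell'}=n^{-\ell}\cdot n^{-\ln 3/\ln n}=n^{-\ell}/3$. Conditioning on that event, Lemma \ref{lemma: LB plus} applied to \textsc{RefineLB} (again with parameter $\ell'$) yields $LB_\text{r}\in[LB_\text{e}^{*},\,OPT]$ with failure probability at most $n^{-\ell}/3$. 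On both of these good events we have $LB_\text{r}\le OPT$, so $\theta=\lambda/LB_\text{r}\ge \lambda/OPT$, which is precisely the hypothesis of Theorem \ref{theorem: theta requirement} (with parameter $\ell'$). That theorem then gives that \textsc{NodeSelection} returns a $(1-1/e-\epsilon)$-approximate solution with failure probability at most $n^{-\ell}/3$. A union bound over the three stages yields a total failure probability of at most $n^{-\ell}$, proving the approximation claim.

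Next I would handle the running time. The time bound for each of the three routines has already been established: \textsc{EstimateLB} runs in $O(\ell'(m+n)\log n)$ expected time (Theorem \ref{theorem: LB star}); \textsc{RefineLB} runs in $O(c(\ell'+k)(m+n)\log n/\epsilon^{2})$ expected time (the theorem immediately following Lemma \ref{lemma: LB plus}); and \textsc{NodeSelection} run with $\theta=\lambda/LB_\text{r}\le \lambda/LB_\text{e}^{*}$ runs in $O(c(\ell'+k)(m+n)\log n/\epsilon^{2})$ expected time, as shown in the paragraph preceding Lemma \ref{lemma: ERAND leq LB} combined with the bound $\mathbb{E}[\mathbb{E}[N_R]/LB_\text{e}^{*}]=O(1+m/n)$. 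Summing these three contributions and absorbing $\ell'=\ell+\ln 3/\ln n=\Theta(\ell)$ into the big-$O$ gives the stated overall complexity $O(c(\ell+k)(m+n)\log n/\epsilon^{2})$.

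The main subtlety, and the only non-routine step, is the bookkeeping around the parameter $\ell'$: one must check both that inflating $\ell$ to $\ell'$ makes every single-stage failure probability at most $n^{-\ell}/3$ (so the union bound closes) and simultaneously that $\ell'=\Theta(\ell)$ so that no factor of $\log n$ or similar sneaks into the complexity. The choice $\ell'=\ell+\ln 3/\ln n$ is exactly engineered to make $n^{\ell'}=3n^{\ell}$, which delivers both properties at once; everything else is direct substitution into the previously established lemmas and theorems.
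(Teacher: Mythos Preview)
Your proposal is correct and follows essentially the same approach as the paper: set $\ell'=\ell+\ln 3/\ln n$ so that each of the three subroutines fails with probability at most $n^{-\ell'}=n^{-\ell}/3$, apply a union bound for correctness, and sum the three per-stage running times (each $O(c(\ell'+k)(m+n)\log n/\epsilon^2)$ with $\ell'=\Theta(\ell)$) for the complexity claim. Your write-up is in fact more detailed than the paper's own argument, which states the union bound and the time summation in just two sentences.
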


\section{\bf Analyzing Various Propagation Models under GCIC} \label{sec: application}

In this section, we describe some special cases of the GCIC
model and provide detailed analysis about TCIM for these models.
To show the generality of the GCIC model, we use the 
\textit{Campaign-Oblivious Independent Cascade Model} 
in \cite{budak2011limiting},
the \textit{Distance-based model} and \textit{Wave propagation model} 
in \cite{carnes2007maximizing}
as specific propagation models.
For each specific model, we first briefly describe how the influence
propagates, give examples of score in a simple RAPG instance as shown in 
Figure \ref{fig: RAPG only}, and analyze the time complexity of 
the TCIM algorithm.

\begin{figure}[htb]
\centering
\includegraphics[width=0.17\textwidth]{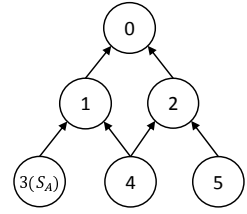}
\caption{Example of a random RAPG instance with $6$ nodes and $6$ edges.
Node $3$ is a seed of source $A$.}\label{fig: RAPG only}
\end{figure}

\subsection{Campaign-Oblivious Independent Cascade model}
Budak et al.~\cite{budak2011limiting} introduced the \textit{Campaign-Oblivious 
Independent Cascade model (COICM)} extending the single source IC model.
The influence propagation process 
starts with two sets of active nodes $S_A$ and $S_B$, and then unfolds 
in discrete steps.
At step $0$, nodes in $S_A$ (resp. $S_B$) are activated and are
in state $I_A$ (resp. $I_B$).
When a node $u$ first becomes activated in step $t$, it gets a single chance to 
activate each of its currently uninfluenced neighbor $v$ and succeeds with the
probability $p_{uv}$.
Budak et al. assumed that one source is prioritized over the other one 
in the propagation process, and nodes influence by the dominant source always 
attempt to influenced its uninfluenced neighbors first.
Here we assume that if there are two or more nodes 
trying to activate a node $v$ 
at a given time step, nodes in state $I_B$ 
(i.e., nodes influenced by source $B$) 
attempt first, which means  source $B$ is prioritized over source $A$.

\mynoindent\textbf{Examples of score.}
Suppose we are given seed sets $S_A$ and $S_B$ and a set of 
active edges $E_a$. In COICM, a node $u$ will be influenced by source $B$ 
if and only if $d_u(S_B,E_a)=d_u(S_A\cup S_B,E_a)$.
For the RAPG instance $R$ in Figure \ref{fig: RAPG only},
\mbox{$f_R(S_B|S_A)=1$} if $S_B\cap \{0,1,2,4,5\}\neq \emptyset$ and 
\mbox{$f_R(S_B|S_A)=0$} otherwise.

\mynoindent\textbf{Analysis of TCIM algorithm.}
Recall that while analyzing the running time of TCIM, 
we assume that if we have a set $\mathcal{R}$ of $\theta$ RAPG instances,
the time complexity for the initialization and update of the ``marginal gain 
vector'' $MG_\mathcal{R}$ is $O(c\theta\cdot \mathbb{E}\left[|E_R|\right])$.
We now show that $c=O(1)$ for COICM.
Suppose we are selecting nodes based on a set $\mathcal{R}$ of $\theta$ RAPG 
instances.
The initialization of $MG_\mathcal{R}$ takes time $O(\theta\cdot \mathbb{E}\left[|E_R|\right])$ 
as for any RAPG instance $R$, we have $f_R(\{u\}|S_A)=1$ for all 
$u\in V_R\backslash S_A$ and $f_R(\{u\}|S_A)=0$ otherwise.
Suppose in one iteration, we add a node $v_i$ to the set $S_B'$ and 
obtain a new seed set $S_B=S_B'\cup \{v_i\}$.
Recall that we define 
\mbox{$\mathcal{R'}=\{R|R\in\mathcal{R},\Delta_R(v_i|S_A,S_B')\}$}
in the greedy approach.
For every RAPG instance $R\in \mathcal{R'}$ and for all 
$u\in V\backslash(S_A\cup S_B)$, we would have $\Delta_R(u|S_A,S_B)=0$ 
and $\Delta_R(u|S_A,S_B')=1$  and hence we need to update 
$MG_\mathcal{R}(u)$ correspondingly.
For each RAPG instance $R$, it appears in $\mathcal{R'}$ 
in at most one iteration. 
Hence, the total time complexity of the initialization and update of the 
``marginal gain vector'' takes time $O(\theta \cdot \mathbb{E}\left[|E_R|\right])$.
It follows that the running time of TCIM is $O((\ell+k)(m+n)\log n/\epsilon^2)$.

\subsection{Distance-based Model}

Carnes et al. proposed the \textit{Distance-based model} 
in \cite{carnes2007maximizing}.
The idea is that a consumer is more likely to be influenced by the early 
adopters if their distance in the network is small.
The model governs the diffusion of source $A$ and $B$ given the initial 
adopters for each source and a set $E_a\subseteq E$ of active edges.
Let $d_u(E_a, S_A\cup S_B)$ be the shortest distance from $u$ to $S_A\cup S_B$ 
along edges in $E_a$
and let $d_u(E_a, S_A\cup S_B)=+\infty$ 
if there are no paths from $u$ to any node in $S_A\cup S_B$.
For any set $S\subseteq V$, we define $h_u(S, d_u(E_a, S_A\cup S_B))$ as the 
number of nodes in $S$ at distance $d_u(E_a, S_A\cup S_B)$ from $u$ along 
edges in $E_a$.
Given $S_A$, $S_B$ and a set of active edge $E_a$, the probability that 
node $u$ will be influenced by source $B$ is
\begin{equation}
\frac{h_u(S_B, d_u(E_a, S_A\!\cup\! S_B))}
{h_u(S_A\!\cup\! S_B, d_u(E_a, S_A\!\cup\! S_B))}.
\end{equation}
Thus, the expected influence of $S_B$ is
\begin{equation}
\sigma(S_B|S_A)\!=\!\mathbb{E}\left[\sum_{u\in V}
	\frac{h_u(S_B, d_u(E_a, S_A\!\cup\! S_B))}
	{h_u(S_A\!\cup\! S_B, d_u(E_a, S_A\!\cup\! S_B))}
\right],
\end{equation}
where the expectation is taken over the randomness of $E_a$.

\mynoindent\textbf{Examples of the score.}
Suppose we are given a random RAPG instance $R$ shown in 
Figure \ref{fig: RAPG only}.
If $S_B\cap \{0,1,2\}\neq \emptyset$, we would have $f_R(S_B|S_A)=1$.
Suppose $S_B=\{4,5\}$, we have $d_0(E_R, S_A\cup S_B)=2$, $h_0(S_B,2)=2$ 
and $h_0(S_A\cup S_B,2)=3$.
Hence the probability that node $0$ will be influenced by source $B$ 
is $\frac{2}{3}$ and we have $f_R(S_B=\{4,5\}|S_A)=\frac{2}{3}$.
For $S_B=\{4\}$ or $S_B=\{5\}$, one can verify that $f_R(S_B|S_A)=\frac{1}{2}$.

\mynoindent\textbf{Analysis of TCIM algorithm.}
We now show that $c=O(k)$ for the Distance-based Model.
In the implementation of TCIM under the Distance-based Model, 
for each RAPG instance $R=(V_R,E_R)$ with ``root'' $r$, we keep 
$d_R(v, r)$ for all $v\in V_R$ and $d_R(S_A, r)$ in the memory.
Moreover, we keep track of the value $h_r(S_A\cup S_B,d_R(S_A, r))$
and $h_r(S_B,d_R(S_A, r))$ for current $S_B$ and put them inside the memory.
Then, for any given RAPG instance $R=(V_R,E_R)$ and a node 
$u\in V_R\backslash(S_A\cup S_B)$, we have 
\begin{equation*}
f_R(S_B\cup \{u\}|S_A)
=\frac{h_r(S_B,d_R(S_A, r))+1}{h_r(S_A\cup S_B,d_R(S_A, r))+1}
\end{equation*}
if $d_R(u,r)=d_R(S_A,r)$ and $f_R(S_B\cup \{u\}|S_A)=1$ otherwise.
In each iteration, for each RAPG instance $R$, the update of 
$h_r(S_A\cup S_B,d_R(S_A, r))$ and $h_r(S_B,d_R(S_A, r))$ 
after expanding previous seed set $S_B$ by adding a node could be done in $O(1)$.
Moreover, for any $R$ and $u\in V_R\backslash(S_A\cup S_B)$,
the evaluation of $\Delta_R(u|S_A,S_B)$
could also be done in $O(1)$.
There are $O(\theta)$ RAPG instances with the total number of nodes 
being $O(\theta\cdot \mathbb{E}\left[|E_R|\right])$.
Hence, in $k$ iterations, it takes $O(k\theta\cdot \mathbb{E}\left[|E_R|\right])$ in total
to initialize and update the marginal gain vector.
Substituting $c$ with $O(k)$ in $O(c(\ell+k)(m+n)\log n/\epsilon^2)$,
the running time of the TCIM algorithm is $O(k(\ell+k)(m+n)\log n/\epsilon^2)$.

\subsection{Wave Propagation Model}
Carnes et al. also proposed the Wave Propagation
Model in \cite{carnes2007maximizing} extending the single source IC model.
Suppose we are given $S_A$, $S_B$ and a set of active edges $E_a$, we denote 
$p(u|S_A,S_B,E_a)$ as the probability that node $u$ gets influenced by source $B$.
We also let $d_{E_a}(S_A\cup S_B,u)$ be the shortest distance from seed nodes 
to $u$ through edges in $E_a$.
Let $N_u$ be the set of neighbors of $u$ whose shortest distance 
from seed nodes through edges in $E_a$ is $d_{E_a}(S_A\cup S_B,u)-1$.
Then, Carnes et al. \cite{carnes2007maximizing} defines 
\begin{equation}
p(u|S_A,S_B,E_a)=\frac{\sum_{v\in N_u} p(v|S_A,S_B,E_a)}{|N_u|}.
\end{equation}
The expected number of nodes $S_B$ can influence given $S_A$ is
\begin{equation}
\sigma(S_B|S_A)=\mathbb{E}\left[\sum\limits_{v\in V} p(v|S_A,S_B,E_a)\right],
\end{equation}
where the expectation is taken over the randomness of $E_a$.

\mynoindent\textbf{Examples of score.}
For a random RAPG instance $R$ shown in Figure \ref{fig: RAPG only}, 
as for the Distance-based Model, we have $f_R(S_B|S_A)=1$ if 
$S_B\cap \{0,1,2\}\neq \emptyset$.
Suppose $S_B=\{4\}$, source $B$ would influence node $4$ and $2$ with 
probability $1$, influence node $1$ with probability $\frac{1}{2}$ and 
influence node $0$ with probability $\frac{3}{4}$.
Hence, $f_R(\{4\}|S_A)=\frac{3}{4}$.
Suppose $S_B=\{5\}$, source $B$ would influence node $5$ and $2$ with 
probability $1$, influence node $0$ with probability $\frac{1}{2}$.
Hence, $f_R(\{5\}|S_A)=\frac{1}{2}$.
Moreover, one can verify that $f_R(\{4,5\}|S_A)=\frac{3}{4}$.

\mynoindent\textbf{Analysis of TCIM algorithm.}
We now show that for a greedy approach based on a set of $\theta$ random RAPG 
instances, it takes $O(kn\cdot \theta\cdot \mathbb{E}[N_R])$ in total to initialize and 
update the ``marginal gain vector''.
In each iteration of the greedy approach, for each RAPG instance 
$R=(V_R,E_R)$ and each node $u\in V_R\backslash(S_A\cup S_B)$,
it takes $O(|E_R|)\leq O(\mathbb{E}[N_R])$ to update the marginal gain vector.
Since there are $\theta$ RAPG instances each having at most $n$ 
nodes and the greedy approach runs in $k$ iteration,
it takes at most $O(kn\cdot \theta\cdot \mathbb{E}[N_R])$ in total 
to initialize and update the marginal gain vector.
Substituting $c=O(kn)$ into $O(c(\ell+k)n(m+n)\log n/\epsilon^2)$,
we can conclude that the running time of TCIM is 
$O(k(\ell+k)n(m+n)\log n/\epsilon^2)$.

\section{\bf Comparison with the Greedy Algorithm} \label{sec: comparison}

In this section, we compare TCIM to the greedy approach with Monte-Carlo method.
We denote the greedy algorithm as \textit{GreedyMC}, and it works as follows.
The seed set $S_B$ is set to be empty initially and
the greedy selection approach runs in $k$ iterations.
In the $i$-th iteration, \textit{GreedyMC} identifies a node 
$v_i\in V\backslash(S_A\cup S_B)$ that maximizes the marginal gain of
influence spread of source $B$, i.e., maximizes
$\sigma(S_B\cup\{v_i\}|S_A)-\sigma(S_B|S_A)$, and put it into $S_B$.
Every estimation of the marginal gain is done by $r$ Monte-Carlo simulations.
Hence, \textit{GreedyMC} runs in
at least $O(kmnr)$ time.

In \cite{tang2014influence}, Tang et al. provided the lower bound of 
$r$ that ensures the $(1-1/e-\epsilon)$ approximation ratio of this method
for single source influence maximization problem.
We extend their analysis on \textit{GreedyMC} and give the following theorem.

\begin{theorem}\label{th: monte carlo r}
For the Competitive Influence Maximization problem,
GreedyMC returns a $(1-1/e-\epsilon)$-approximate solution with at least
$1\!-\!n^{-\ell}$ probability, if 
\begin{equation}\label{eq: monte carlo r}
r\geq (8k^2+2k\epsilon)\cdot n
	\cdot \frac{(\ell+1)\ln n + \ln k}{\epsilon^2\cdot OPT}.
\end{equation}
\end{theorem}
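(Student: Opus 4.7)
The plan is to adapt the Chernoff--Hoeffding argument behind Lemma \ref{lemma: theta requirement} to the Monte-Carlo greedy setting. The essential difference is that GreedyMC produces up to $kn$ distinct influence estimates across its $k$ iterations, one for each candidate node in each round, and a single inaccurate estimate can derail a greedy pick. I therefore aim to choose $r$ so that, with probability at least $1-n^{-\ell}$, all $kn$ estimates are simultaneously within an additive error of $\eta:=\tfrac{\epsilon}{2k}\cdot OPT$ of their true values; a standard noisy-greedy analysis on a monotone submodular objective will then recover the $(1-1/e-\epsilon)$ guarantee.

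First I would fix a single estimation task, say estimating $\sigma(S|S_A)$ for a specific $S\subseteq V\setminus S_A$ with $|S|\leq k$. Each Monte-Carlo trial yields $X_j\in[0,n]$, the number of $B$-influenced nodes in one random edge realization, so $Y_j := X_j/n$ gives $r$ i.i.d.\ variables in $[0,1]$ with mean $\mu = \sigma(S|S_A)/n \leq OPT/n$. Applying Lemma \ref{lemma: hoeffding} with relative deviation $\delta = \eta/(n\mu)$ and mimicking the algebra from the proof of Lemma \ref{lemma: theta requirement} (with $\epsilon$ replaced by $\epsilon/k$) bounds the failure probability of this single estimate by
\[
2\exp\!\left(-\frac{\epsilon^2\, OPT}{(8k^2+2k\epsilon)\,n}\cdot r\right).
\]
Plugging in the hypothesized lower bound (\ref{eq: monte carlo r}) on $r$ drives this quantity below $n^{-\ell}/(kn)$.

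Next, I would union-bound over the at most $kn$ estimates required throughout the greedy procedure, so the overall failure probability is at most $n^{-\ell}$. Conditioning on the complementary event, every estimated marginal gain $\widehat{\sigma}(S_B\cup\{v\}|S_A) - \widehat{\sigma}(S_B|S_A)$ lies within $2\eta = \tfrac{\epsilon}{k}\, OPT$ of its true value. A standard noisy-greedy analysis for monotone submodular maximization then yields the recurrence $OPT - \sigma(S_B^{(i+1)}|S_A) \leq (1-1/k)\bigl(OPT - \sigma(S_B^{(i)}|S_A)\bigr) + 2\eta$, which unrolls to
\[
\sigma(S_B|S_A) \geq (1-1/e)\,OPT - 2\eta k = (1-1/e-\epsilon)\,OPT.
\]

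The main technical obstacle is calibrating the per-estimate accuracy $\eta=\tfrac{\epsilon}{2k}OPT$ and per-estimate failure tolerance $n^{-\ell}/(kn)$ so that the quadratic blow-up in the Chernoff exponent (from replacing $\epsilon$ with $\epsilon/k$) and the extra $\ln k + \ln n$ from the $kn$-fold union bound combine to give precisely the stated coefficients $(8k^2+2k\epsilon)$ and $(\ell+1)\ln n + \ln k$. Everything else is a direct reuse of Lemmas \ref{lemma: EISB}, \ref{lemma: hoeffding}, and the analytical template of Theorem \ref{theorem: theta requirement}.
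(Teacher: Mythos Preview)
Your proposal is correct and follows essentially the same route as the paper: a per-estimate Chernoff--Hoeffding bound with additive accuracy $\tfrac{\epsilon}{2k}\,OPT$, a union bound over the at most $kn$ candidate sets GreedyMC evaluates, and then the noisy-greedy submodular analysis to recover $(1-1/e-\epsilon)$. The paper's own proof is terser---it states the per-set failure probability and the $kn$-fold union bound, then defers the final recurrence argument to \cite{tang2014influence} (Lemma~10)---whereas you have written out that recurrence explicitly; but the structure and the calibration of constants are identical.
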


\begin{proof}
Let $S_B$ be any node set that contains at most $k$ nodes in $V\backslash S_A$
and let $\sigma'(S_B|S_A)$ be the estimation of $\sigma(S_B|S_A)$ computed by 
$r$ Monte-Carlo simulations.
Then, $r\sigma'(S_B|S_A)$ can be regarded as the sum of $r$ i.i.d.
random variable bounded in $[0,1]$ with the mean value 
$\sigma(S_B|S_A)$.
By Chernoff-Hoeffding bound, if $r$ satisfies Ineq. (\ref{eq: monte carlo r}),
it could be verified that
$\Pr\left[|\sigma'(S_B|S_A)-\sigma(S_B|S_A)|
	>\frac{\epsilon}{2k}OPT\right]$
is at least $k^{-1}\cdot n^{-(\ell+1)}$.
Given $G$ and $k$, \textit{GreedyMC} considers at most $kn$ node sets 
with sizes at most $k$.
Applying the union bound, with probability at least $1-n^{-\ell}$, we have
\begin{equation}\label{eq: monte carlo accuracy}
|\sigma'(S_B|S_A)-\sigma(S_B|S_A)|>\frac{\epsilon}{2k}\cdot OPT
\end{equation}
holds for all sets $S_B$ considered by the greedy approach.
Under the assumption that $\sigma'(S_B|S_A)$ for all set $S_B$ considered 
by \textit{GreedyMC} satisfies Inequality (\ref{eq: monte carlo accuracy}),
\textit{GreedyMC} returns a $(1-1/e-\epsilon)$-approximate solution.
For the detailed proof of the accuracy of \textit{GreedyMC},
we refer interested readers to \cite{tang2014influence} (Proof of Lemma 10).
\end{proof}

\noindent
{\bf Remark:}
Suppose we know the exact value of $OPT$ and set $r$ to the smallest value 
satisfying Ineq. (\ref{eq: monte carlo r}), the time complexity of 
\textit{GreedyMC} would be 
$O(k^3\ell n^2m\log n\cdot \epsilon^{-2}/ OPT)$.
Given that $OPT\leq n$, the time complexity of \textit{GreedyMC}
is at least $O(k^3\ell nm\log n/\epsilon^{2})$.
Therefore, if $c\leq O(k^3)$, TCIM is much more efficient 
than \textit{GreedyMC}.
If $c=O(k^3n)$, the time complexity of TCIM is still competitive with
\textit{GreedyMC}.
Since we usually have no prior knowledge about $OPT$,
even if $c\geq O(k^3n)$, TCIM is still a better choice than 
the \textit{GreedyMC}.

\section{{\bf Experimental results}} \label{section: experiments}

Here, we present experimental results 
on three real-world networks
to demonstrate the effectiveness and efficiency of the TCIM framework.

\mynoindent\textbf{Datasets.}
Our datasets contain three real-world networks:
(i) A \textit{Facebook-like social network}
containing $1,899$ users and $20,296$ directed edges~\cite{opsahl2009clustering}.
(ii) The \textit{NetHEPT network}, an academic collaboration network
including $15,233$ nodes and $58,891$ undirected edges~\cite{chen2009efficient}.
(iii) An \textit{Epinions social network} of the who-trust-whom relationships 
from the consumer review site Epinions~\cite{richardson2003trust}.
The network contains $508,837$ directed ``trust'' relationships 
among $75,879$ users.
As the \textit{weighted IC} model in \cite{kempe2003maximizing},
for each edge $e_{uv}\in E$,
suppose the number of edges pointing to $v$ is $d_v^-$, 
we set 
$p_{uv}=1/d_v^-$.

\mynoindent\textbf{Propagation models.}
For each dataset listed above, 
we use the following propagation models:
the \textit{Campaign-Oblivious Independent Cascade Model (COICM)},
\textit{Distance-based model} and
\textit{Wave propagation model}
as described in Section \ref{sec: application}.

\mynoindent\textbf{Algorithms.}
We compare TCIM with two greedy algorithms and 
a previously proposed heuristic algorithm, they are:                         
\begin{itemize}
\item {\bf CELF}:
A efficient greedy approach based on a ``lazy-forward'' optimization 
technique~\cite{leskovec2007cost}. 
It exploits the monotone and submodularity of the object function
to accelerate the algorithm.

\item {\bf CELF++}:
A variation of CELF which further exploits the submodularity of the 
influence propagation models~\cite{goyal2011celf++}.
It avoids some unnecessary re-computations of marginal gains in future 
iterations at the cost of introducing more computation for each candidate 
seed set considered in the current iteration.

\item {\bf SingleDiscount}:
A simple degree discount heuristic initially proposed
for single source influence maximization problem~\cite{chen2009efficient}.
For the CIM problem, we adapt this heuristic method and select $k$ nodes iteratively.
In each iteration, for a given set $S_A$ and current $S_B$, we select 
a node $u$ such that it has the maximum number of outgoing edges 
targeting nodes not in $S_A\cup S_B$.
\end{itemize}

For the TCIM algorithm, let $\mathcal{R}$ be all RAPG instances generated
in Algorithm \ref{algo: node selection}
and let $S_B$ be the returned seed set for source $B$,
we report $n\cdot \left(\sum_{R\in\mathcal{R}}(f_R(S_B|S_A)\right)/|\mathcal{R}|$ 
as the estimation of $\sigma(S_B|S_A)$.
For other algorithms tested, we estimate the influence spread of the 
returned solution $S_B$ using $50,000$ Monte-Carlo simulations.
For each experiment, we run each algorithm three times and report the 
average results.

\mynoindent\textbf{Parametric values.}
For TCIM, the default parametric values are
$|S_A|=50$, $\epsilon=0.1$, $k=50$, $\ell=1$.
For CELF and CELF++, for each candidate seed set $S_B$ under consideration,
we run $r\!=\!10,000$ Monte-Carlo simulations to 
estimate the expected influence spread of $S_B$.
We set $r\!=\!10,000$ following the practice in 
literature (e.g., \cite{kempe2003maximizing})
One should note that
the value of $r$ required in all of our experiment is much larger than $10,000$
by Theorem \ref{th: monte carlo r}.
For each dataset, the seed set $S_A$ for source $A$ is returned by the TCIM 
algorithm with parametric values $S_A=\emptyset$, $\epsilon=50$ and $\ell=1$.

\noindent
{\bf Results on Facebook-like network:}
We first compare TCIM to CELF, CELF++ and the SingleDiscount heuristic
on the \textit{Facebook-like social network}.

Figure \ref{fig: fblike influence vs k} shows the expected influence spread
of $S_B$ selected by TCIM and other methods.
One can observe that the influence spread of $S_B$ returned by TCIM, CELF and 
CELF++ are comparable.
The expected influence spread of the seeds selected by SingleDiscount is 
slightly less than other methods.
Interestingly, there is no significant difference between 
the expected influence 
spread of the seeds returned by TCIM with $\epsilon=0.1$ and $\epsilon=0.5$,
which shows that the quality of solution does not degrade too quickly
with the increasing of $\epsilon$.

Figure \ref{fig: fblike time vs k} shows the running time of TCIM, CELF and
CELF++, with $k$ varying from $1$ to $50$.
Note that we did not show the running time of SingleDiscount
because SingleDiscountit is a 
heuristic method and the expected influence spread 
of the seeds returned is inferior to the influence spread of
the seeds returned by the other three algorithms.
Figure \ref{fig: fblike time vs k} shows that among three 
influence propagation models,
as compared to CELF and CELF++,
TCIM runs {\em two to three orders of magnitude faster} if $\epsilon=0.1$
and {\em three to four orders of magnitude faster} when $\epsilon=0.5$.
CELF and CELF++ have similar running time because
most time is spent to select the first seed node for source $B$
and CELF++ differs from CELF starting from the selection of the second seed.

\begin{figure}[htb]
\centering
\includegraphics[width=1\columnwidth]{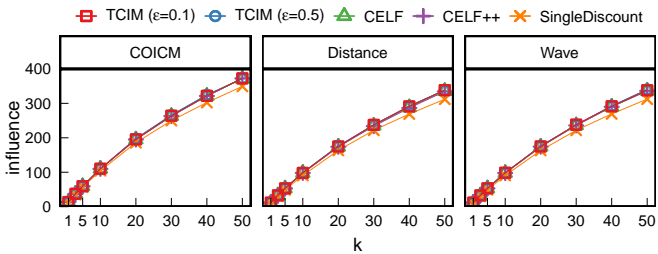}
  \caption{Results on the \textit{Facebook-like network}:
	Influence versus $k$ 
	under
	\textit{Campaign-Oblivious Independent Cascade Model (COCIM)},
	\textit{Distance-based model} and \textit{Wave propagation model}.
	($|S_A|=50$, $\ell=1$)}
  \label{fig: fblike influence vs k}
\end{figure}

\begin{figure}[htb]
\centering
\includegraphics[width=1\columnwidth]{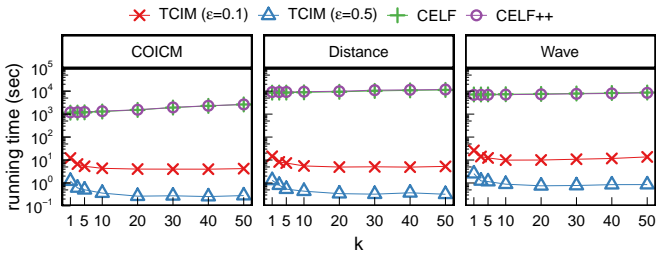}
  \caption{Results on the \textit{Facebook-like network}:
	Running time versus $k$ 
	under
	\textit{Campaign-Oblivious Independent Cascade Model (COCIM)},
	\textit{Distance-based model} and \textit{Wave propagation model}.
	($|S_A|=50$, $\ell=1$)}
  \label{fig: fblike time vs k}
\end{figure}

\noindent
{\bf Results on large networks:}
For \textit{NetHEPT} and \textit{Epinion},
we experiment by varying $k$, $|S_A|$ and $\epsilon$ to demonstrate the 
efficiency and effectiveness of the TCIM.
We compare the influence spread of TCIM to SingleDiscount heuristic only,
since CELF and CELF++ do not scale well on larger datasets.

\begin{figure}[htb]
\centering
\captionsetup[subfigure]{oneside,margin={0.6cm,0cm}}
\subfloat[\textit{NetHEPT}]{
\includegraphics[width=0.45\columnwidth]{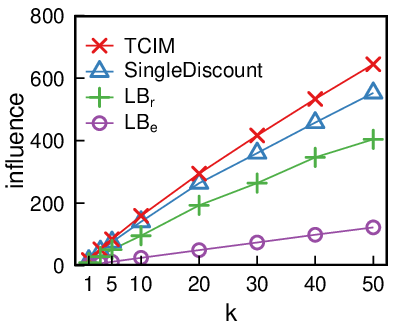}}
\subfloat[\textit{Epinion}]{
\includegraphics[width=0.45\columnwidth]{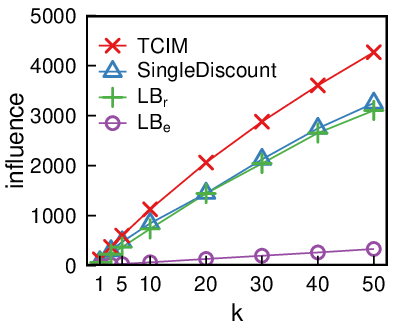}}
  \caption{Results on large datasets:
	Influence spreads versus $k$ under the \textit{Wave propagation model.}
	($|S_A|\!=\!50$, $\epsilon\!=\!0.1$, $\ell\!=\!1$)}
  \label{fig: large influence vs k}
\end{figure}

Figure \ref{fig: large influence vs k} shows the influence spread of 
the solution returned by TCIM and SingleDiscount, where the influence
propagation model is the \textit{Wave propagation model}.
We also show the
value of
$LB_\text{e}$ and $LB_\text{r}$ returned by the lower
bound estimation and refinement algorithm.
On both datasets, the expected influence of the seeds returned by TCIM
exceeds the expected influence of the seeds return by SingleDiscount.
Moreover, as in TIM/TIM$^+$~\cite{tang2014influence}, for every $k$, the 
lower bound $LB_\text{r}$ improved by Algorithm \ref{algo: refine LB}
is significant larger than the lower bound $LB_\text{e}$ returned by
Algorithm \ref{algo: estimate LB}.
When the influence propagation model is COICM or the 
\textit{Distance-based model}, the results are similar to that in 
Figure \ref{fig: large influence vs k}.

\begin{figure}[htb]
\centering
\captionsetup[subfigure]{oneside,margin={0.6cm,0cm}}
\subfloat[\textit{NetHEPT}]{
\includegraphics[width=0.45\columnwidth]{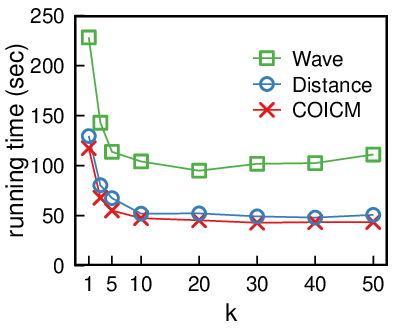}}
\subfloat[\textit{Epinion}]{
\includegraphics[width=0.45\columnwidth]{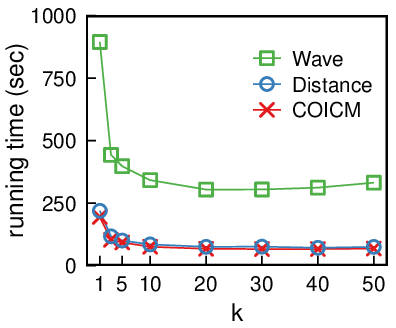}}
  \caption{Results on large datasets:
	Running time versus $k$ under 
	\textit{Campaign-Oblivious Independent Cascade Model (COCIM)},
	\textit{Distance-based model} and \textit{Wave propagation model}.
	($|S_A|=50$, $\epsilon=0.1$, $\ell=1$)}
  \label{fig: large time vs k}
\end{figure}

Figure \ref{fig: large time vs k} shows the running time of TCIM,
with $k$ varying from $1$ to $50$.
As in \cite{tang2014influence}, for every influence propagation model,
when $k\!=\!1$, the running time of TCIM is the largest.
With the increase of $k$, the running time tends to drop first,
and it may increase slowly after $k$ reaches a certain number.
This is because the running time of TCIM is mainly related to the number 
of RAPG instances generated in Algorithm \ref{algo: node selection},
which is $\theta=\lambda/LB_\text{r}$.
When $k$ is small, $LB_\text{r}$ is also small as $OPT$ is small.
With the increase of $k$, if $LB_\text{r}$ increases faster than 
the decrease of $\lambda$, $\theta$ decreases and the running time of 
TCIM also tends to decrease.
From Figure \ref{fig: large time vs k}, we see that TCIM is especially
efficient when $k$ is large.
Moreover, for every $k$, among three models, the running time of TCIM
based on the
\textit{Campaign-Obilivous Independent Cascade Model} is the smallest
while the running time of TCIM based on the \textit{Wave propagation model}
is the largest.
This is consistent with the analysis of the running of TCIM in 
Section \ref{sec: application}.

\begin{figure}
\centering
\captionsetup[subfigure]{oneside,margin={0.6cm,0cm}}
\subfloat[\textit{NetHEPT}]{
\includegraphics[width=0.45\columnwidth]{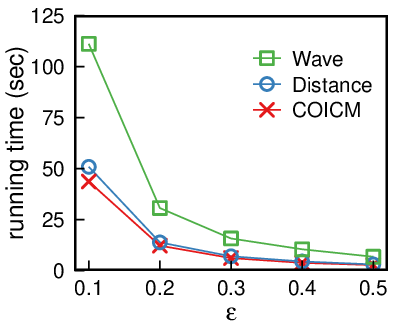}}
\subfloat[\textit{Epinion}]{
\includegraphics[width=0.45\columnwidth]{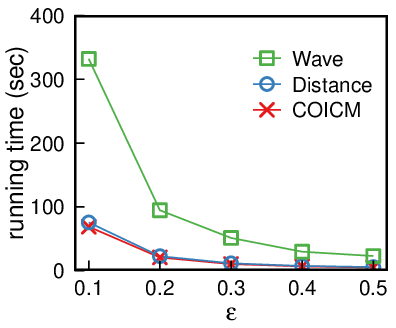}}
  \caption{Results on large datasets:
	Running time versus $\epsilon$ 
	under
	\textit{Campaign-Oblivious Independent Cascade Model (COCIM)},
	\textit{Distance-based model} and \textit{Wave propagation model}.
	($|S_A|=50$, $k=50$, $\ell=1$)}
  \label{fig: large time vs epsilon}
\end{figure}

Figure \ref{fig: large time vs epsilon} shows that the running time of 
TCIM decreases quickly with the increase of $\epsilon$,
which is consistent with its $O(c(\ell+k)(m+n)\log n/\epsilon^2)$ 
time complexity.
When $\epsilon\!=\!0.5$, TCIM finishes within 7 seconds for \textit{NetHEPT} 
dataset and finishes within 23 seconds for \textit{Epinion} dataset.
This implies that if we do not require a very tight approximation ratio,
we could use a larger $\epsilon$ as input and the performance of
TCIM could improve significantly.

\begin{figure}[htb]
\centering
\captionsetup[subfigure]{oneside,margin={0.6cm,0cm}}
\subfloat[\textit{NetHEPT}]{
\includegraphics[width=0.45\columnwidth]{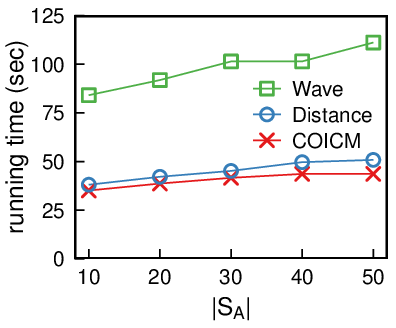}}
\subfloat[\textit{Epinion}]{
\includegraphics[width=0.45\columnwidth]{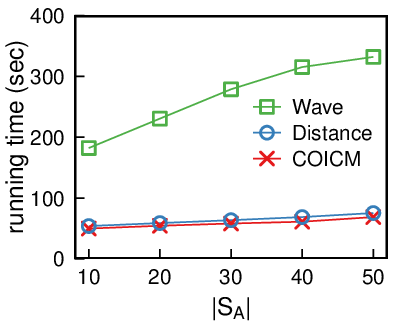}}
\caption{Results on large datasets:
	Running time versus $|S_A|$ 
	under
	\textit{Campaign-Oblivious Independent Cascade Model (COCIM)},
	\textit{Distance-based model} and \textit{Wave propagation model}.
	($k=50$, $\epsilon=0.1$, $\ell=1$)}
  \label{fig: large time vs sa}
\end{figure}

Figure \ref{fig: large time vs sa} shows the running time of TCIM
as a function of the seed-set size of source $A$.
For any given influence propagation model,
when $|S_A|$ increases, $OPT$ decreases and $LB_\text{r}$ tends to decrease.
As a result, the total number of RAPG instances required in the node
selection phase increases and consequently, the running time of TCIM
also increases.

\begin{figure}[ht]
\centering
\captionsetup[subfigure]{oneside,margin={0.3cm,0cm}}
\subfloat[\textit{NetHEPT}]{
\includegraphics[width=0.45\columnwidth]{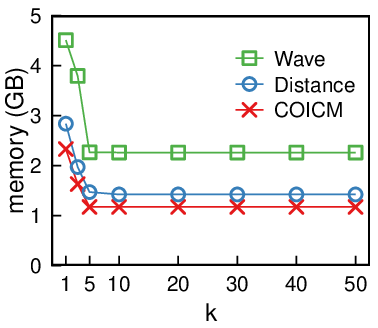}}
\subfloat[\textit{Epinion}]{
\includegraphics[width=0.45\columnwidth]{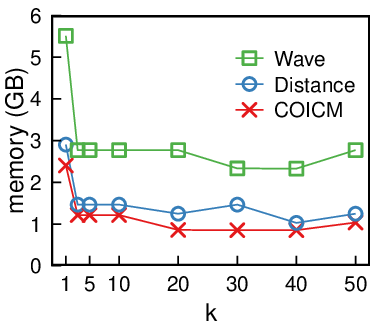}}
\caption{Results on large datasets:
	Memory consumption versus $k$ under 
	\textit{Campaign-Oblivious Independent Cascade Model (COCIM)},
	\textit{Distance-based model} and \textit{Wave propagation model}.
	($|S_A|=50$, $\epsilon=0.1$, $\ell=1$) }
  \label{fig: mem time vs k}
\end{figure}

Figure \ref{fig: mem time vs k} shows the memory consumption of TCIM
as a function of $k$.
For any $k$, TCIM based on
\textit{Campaign-Oblivious Independent Cascade Model}
consumes the least amount of memory because we only need to store the nodes
for each RAPG instance.
TCIM based on \textit{Wave propagation model} consumes the largest amount of memory 
because we need to store both the nodes and edges of each RAPG instance.
For the \textit{Distance-based model}, we do not need to store the edges
of RAPG instances, but need to store some other information for each RAPG
instance; therefore, the memory consumption is in the middle.
For all three propagation models and on both datasets, the memory requirement
drops when $k$ increases because the number of RAPG instances 
required tends to decrease.

\section{\bf Conclusion} \label{sec: conclusion}

In this work, we introduce a ``\textit{General Competitive Independent Cascade 
(GCIC)}'' model and define the ``\textit{Competitive Influence 
Maximization (CIM)}'' problem.
We then present a \textit{Two-phase Competitive Influence Maximization (TCIM)} 
framework to solve the CIM problem under GCIC model.
TCIM returns $(1-1/e-\epsilon)$-approximate solutions with probability at least 
$1-n^{-\ell}$ and has time complexity $O(c(\ell+k)(m+n)\log n/\epsilon^2)$, 
where $c$ depends on specific influence propagation model and may also depend 
on $k$ and graph $G$.
To the best of our knowledge, this is the first general algorithmic framework
for the \textit{Competitive Influence 
Maximization (CIM)} problem with both performance guarantee and practical running time.
We analyze TCIM under the \textit{Campaign-Oblivious Independent 
Cascade model} in \cite{budak2011limiting}, the \textit{Distance-based 
model} and the \textit{Wave propagation model} in \cite{carnes2007maximizing}.
And we show that, under these three models, 
the value of $c$ is $O(1)$, $O(k)$ and $O(kn)$ respectively.
We provide extensive experimental results to demonstrate the efficiency and 
effectiveness of TCIM.
The experimental results show that TCIM returns solutions comparable with those 
returned by the previous state-of-the-art greedy algorithms, but it runs 
{\em up to four orders of magnitute faster} than them.
In particular, when $k\!=\!50$, $\epsilon\!=\!0.1$ and 
$\ell\!=\!1$, given the set of $50$ 
nodes selected by the competitor, TCIM returns the solution within $6$ minutes 
for a dataset with $75,879$ nodes and $508,837$ directed edges.

\end{document}